\documentclass[11pt]{article}
\usepackage{amsmath}
\usepackage{amsthm}
\usepackage{amssymb}
\usepackage{algorithm}
\usepackage{subfig}
\usepackage{color}
\usepackage[english]{babel}
\usepackage{graphicx}
\usepackage{wrapfig,epsfig}
\usepackage{epstopdf}
\usepackage{url}
\usepackage{graphicx}
\usepackage{color}
\usepackage{epstopdf}
\usepackage{algpseudocode}
\usepackage{scrextend}
\usepackage[T1]{fontenc}
\usepackage{bbm}
\usepackage{comment}
\usepackage{authblk}
\usepackage{multicol}
\usepackage{dsfont}

\let\C\relax
\usepackage{tikz}
\usepackage{hyperref}  
\hypersetup{colorlinks=true,citecolor=blue,linkcolor=blue} 
\usetikzlibrary{arrows}
\usepackage[margin=1in]{geometry}
\linespread{1}
\graphicspath{{./figs/}}

\author{
  Vasileios Nakos, Zhao Song, Zhengyu Wang \\
  \texttt{ \{vasileiosnakos,zhaos,zhengyuwang\}@g.harvard.edu} \\
  Harvard University
}  


\date{}

\title{(Nearly) Sample-Optimal Sparse Fourier Transform in Any Dimension; \textsc{RIP}less and Filterless}

\newtheorem{theorem}{Theorem}[section]
\newtheorem{lemma}[theorem]{Lemma}
\newtheorem{definition}[theorem]{Definition}

\newtheorem{fact}[theorem]{Fact}
\newtheorem{remark}[theorem]{Remark}

\newcommand{\wh}{\widehat}
\newcommand{\wt}{\widetilde}
\newcommand{\ov}{\overline}

\renewcommand{\i}{\mathbf{i}}

\renewcommand{\varepsilon}{\epsilon}
\renewcommand{\tilde}{\wt}

\renewcommand{\bar}{\ov}

\DeclareMathOperator*{\E}{{\bf {E}}}
\DeclareMathOperator*{\Var}{{\bf {Var}}}
\DeclareMathOperator*{\Z}{\mathbb{Z}}
\DeclareMathOperator*{\C}{\mathbb{C}}

\DeclareMathOperator{\supp}{supp}
\DeclareMathOperator{\sparse}{sparse}
\DeclareMathOperator{\poly}{poly}


\makeatletter
\newcommand*{\RN}[1]{\expandafter\@slowromancap\romannumeral #1@}
\makeatother

\usepackage{lineno}

%
%
%
%
%
%
%
%
\usepackage{etoolbox}

\makeatletter

\newcommand{\define}[4][ignore]{%
  \ifstrequal{#1}{ignore}{}{
  \@namedef{thmtitle@#2}{#1}}%
  \@namedef{thm@#2}{#4}%
  \@namedef{thmtypen@#2}{lemma}%
  \newtheorem{thmtype@#2}[theorem]{#3}%
  \newtheorem*{thmtypealt@#2}{#3~\ref{#2}}%
}

\newcommand{\state}[1]{%
  \@namedef{curthm}{#1}
  \@ifundefined{thmtitle@#1}{
  \begin{thmtype@#1}
    }{
  \begin{thmtype@#1}[\@nameuse{thmtitle@#1}]
  }
    \label{#1}
    \@nameuse{thm@#1}
  \end{thmtype@#1}
  \@ifundefined{thmdone@#1}{
  \@namedef{thmdone@#1}{stated}%
  }{}
}

\newcommand{\restate}[1]{%
  \@namedef{curthm}{#1}
  \@ifundefined{thmtitle@#1}{
    \begin{thmtypealt@#1}
    }{
  \begin{thmtypealt@#1}[\@nameuse{thmtitle@#1}]
  }
    \@nameuse{thm@#1}
  \end{thmtypealt@#1}
  \@ifundefined{thmdone@#1}{
  \@namedef{thmdone@#1}{stated}%
  }{}
}

\newcommand{\thmlabel}[1]{
  \@ifundefined{thmdone@\@nameuse{curthm}}{\label{#1}
    }{\tag*{\eqref{#1}}}
}
\makeatother

\begin{document}

\begin{titlepage}
  \maketitle
  \begin{abstract}
In this paper, we consider the extensively studied problem of computing a $k$-sparse approximation to the $d$-dimensional Fourier transform of a length $n$ signal. Our algorithm uses $O(k \log k \log n)$ samples, is dimension-free, operates for any universe size, and achieves the strongest $\ell_{\infty}/\ell_2$ guarantee, while running in a time comparable to the Fast Fourier Transform. In contrast to previous algorithms which proceed either via the Restricted Isometry Property or via filter functions, our approach offers a fresh perspective to the sparse Fourier Transform problem.

  \end{abstract}
  \thispagestyle{empty}
\end{titlepage}

\newpage



\section{Introduction}

Initiated in discrete signal processing, compressed sensing/sparse recovery is an extensively studied branch of mathematics and algorithms, which postulates that a small number of linear measurements suffice to approximately reconstruct the best $k$-sparse approximation of a vector $x\in \mathbb{C}^n$ \cite{ct06,crt06,d06}. 
Besides substantial literature on the subject, compressed sensing has wide real-world applications in imaging, astronomy, seismology, etc. 
One of the initial papers in the field, Candes, Romberg and Tao \cite{crt06a}, has almost $15,000$ references.

Probably the most important subtopic is sparse Fourier transform, which aims to reconstruct a $k$-sparse vector from Fourier measurements. 
In other words, measurements are confined to the so-called Fourier ensemble. In Optics imaging \cite{g05,v11} and Magnetic resonance imaging (\textsc{MRI}) \cite{assn08}, the physics \cite{r89} of the underlying device restricts us to the Fourier ensemble, where the sparse Fourier problem becomes highly relevant. In fact, these applications became one of the inspirations for Candes, Romberg and Tao. 
The number of samples plays a crucial role: they determine the amount of radiation a patient receives in CT scans and taking fewer samples can reduce the amount of time the patient needs to spend in the machine. 
The framework has found its way in life-changing applications, including \textsc{Compressed Sensing GRAB-VIBE}, \textsc{CS SPACE}, \textsc{CS SEMAC} and \textsc{CS TOF} by Siemens \cite{siemens}, and Compressed Sense by Phillips \cite{phillips}. 
Its incorporation into the MRI technology allows faster acquisition rates, depiction of dynamic processes or moving organs, as well as acceleration of \textsc{MRI} scanning up to a factor of $40$. 
In the words of \textsc{SIEMENS} Healthineers: 

{ \emph{ This allows bringing the advantages of Compressed Sensing \textsc{GRASP-VIBE} to daily clinical routine.
\begin{itemize}
\item Perform push-button, free-breathing liver dynamics.
\item Overcome timing challenges in dynamic imaging and respiratory artifacts.
\item Expand the patient population eligible for abdominal \textsc{MRI}.
\end{itemize}
}}

On the other hand, Fourier transform is ubiquitous: image processing, audio processing, telecommunications, seismology, polynomial multiplication, \textsc{Subset Sum} and other textbook algorithms are some of the best-known applications of Fast Fourier Transform. 
The Fast Fourier Transform (FFT) by Cooley and Tukey \cite{ct65} runs in $O(n \log n)$ time and has far-reaching impact on all of the aforementioned cases. We can thus expect that algorithms which exploit sparsity assumptions about the input and can outperform FFT in applications are of high practical value. 
Generally, the two most important parameters one would like to optimize are sample complexity, i.e. the numbers needed to obtain from the time domain, as well as the time needed to approximate the Fourier Transform.

Two different lines of research exist for the problem: one focuses solely on sample complexity, while the other tries to achieve sublinear time while keeping the sample complexity as low as possible. 
The first line of research operates via the renowned Restricted Isometry Property (RIP), which proceeds by taking random samples and solving a linear/convex program, or an iterative thresholding procedure \cite{ct06,ddts06,tg07,bd08,dm08,rv08,bd09short,bd09,nt09,nv09,gk09,bd10,nv10,f11,b14,hr16}. Such algorithms are analyzed in two steps as follows: 
The first step ensures that, after sampling an appropriate number of points from the time domain, the inverse DFT matrix restricted on the rows indexed by those points acts as a near isometry on the space of $k$-sparse vectors. 
All of the state-of-the-art results \cite{ct06,rv08,b14,hr16} employ chaining arguments to make the analysis of this sampling procedure as tight as possible. 
The second part is to exploit the aforementioned near-isometry property to find the best $k$-sparse approximation to the signal. 
There, existing approaches either follow an iterative procedure which gradually denoise the signal \cite{bd08,nt09,nv09}, or perform $\ell_1$ minimization \cite{ct06}, a method that promotes sparsity of solutions.

The second line of research tries to implement arbitrary linear measurements via sampling Fourier coefficients \cite{gl89,m92,km93,ggims02,ags03,gms05,i08,i10,hikp12a,hikp12b,lawlor13,iw13,PR14,ikp14,ik14,k16,k17,ci17,bzi17,mzic2017,ln19} and use sparse functions (in the time domain) which behave like bandpass filters in the frequency domain. 
The seminal work of Kapralov \cite{k17} achieves $O(k \log n)$ samples and running time that is some $\log $ factors away from the sample complexity. 
This would be the end of the story, if not for the fact that this algorithm does not scale well with dimension, since it has an exponential dependence on $d$. 
Indeed, in many applications, one is interested in higher dimensions, rather than the one-dimensional case. 
The main reason\footnote{But not the only one: pseudorandom permutations for sparse FT in high dimensions also incur an exponential loss, and it is not known whether this can be avoided.} why this curse of dimensionality appears is the lack of dimension-independent ways to construct functions that approximate the $\ell_\infty$ ball and are sufficiently sparse in the time domain. 
A very nice work of Kapralov, Velingker and Zandieh \cite{kvz19} tries to remedy that by combining the standard execution of FFT with careful aliasing, but their algorithm works in a noiseless setting, and has a polynomial, rather than linear, dependence on $k$; the running time is polynomial in $k, \log n$ and the exponential dependence is avoided. 
It is an important and challenging question whether a robust and more efficient algorithm can be found. 

We note that in many applications, such as MRI or computed tomography (CT), the main focus is the sample complexity; the algorithms that have found their way to industry are, to the best of our knowledge, not concerned with sublinear running time, but with the number of measurements, which determine the acquisition time, or in CT the radiation dose the patient receives. 
Additionally, it is worth noting some recent works on sparse Fourier transform in the continuous setting, see \cite{i10,iw13,iw13,bcgls14,ps15,ckps16,s17,hkmmvz19,cp19a,cp19b,song19}.

\paragraph{Our Contribution.} We give a new algorithm for the sparse Fourier transform problem, which has $O( k \log n \log k)$ sample complexity for any dimension, and achieves the $\ell_\infty/\ell_2$ guarantee\footnote{This is the strongest guarantee in the sparse recovery literature. See also the caption of the table in Section 1.2}, while running in time $\tilde{O}(n)$. 
The previous state-of-the-art algorithm that achieved such a guarantee is the work of Indyk and Kapralov \cite{ik14}, which has $ 2^{O(d \log d)} k \log n $ sample complexity; an exponentially worse dependence on $d$. 
The work of \cite{hr16} obtains $O(k \log n \log^2k)$ samples in any dimension, but has a much weaker error guarantee\footnote{They achieve $\ell_2/\ell_1$ instead of $\ell_\infty/\ell_2$, see next Section for comparison. }, while their approach requires $\Omega( k \log n \log k)$ samples in high dimensions \cite{r19}. 
Moreover, the algorithm in \cite{ik14} operates when the universe size in each dimension is a power of $2$, whereas there is no restriction in our work. 
To obtain our result, we introduce a set of new techniques, deviating from previous work, which used the Restricted Isometry Property and/or filter functions. 




\subsection{Preliminaries}\label{sec:preliminary}

For any positive integer $n$, we use $[n]$ to denote $\{1,2,\cdots,n\}$. 
We assume that the universe size $n = p^d$ for any positive integer $p$. Our algorithm facilitates $n = \Pi_{j=1}^d p_j$ for any positive integers $p_1,\ldots,p_d$, but we decide to present the case $n=p^d$ for ease of exposition; the proof is exactly the same in the more general case.
Let $\omega = e^{2\pi\i / p}$ where $\i = \sqrt{-1}$.
We will work with the normalized $d$-dimensional Fourier transform
\begin{align*}
\wh{x}_f = \frac{1}{ \sqrt{n} } \sum_{t \in [p]^d} x_t \cdot \omega^{f^\top t}, & ~ \forall f \in [p]^d
\end{align*}
and the inverse Fourier transform is
\begin{align*}
x_t = \frac{1}{ \sqrt{n} } \sum_{f \in [p]^d} \wh{x}_f \cdot \omega^{-f^\top t}, & ~ \forall t \in [p]^d.
\end{align*}

For any vector $x$ and integer $k$, we denote $x_{-k}$ to be the vector obtained by zeroing out the largest (in absolute value) $k$ coordinates from $x$.

\subsection{Our result}

Apart from being dimension-independent and working for any universe size, our algorithm satisfies $\ell_\infty/\ell_2$, which is the strongest guarantee out of the standard guarantees considered in compressed sensing tasks. A guarantee $G_1$ is stronger than guarantee $G_2$ if for any $k$-sparse recovery algorithm that satisfies $G_1$ we can obtain a $O(k)$-sparse recovery algorithm that satisfies $G_2$. See also below for a comparison between $\ell_\infty/\ell_2$ and $\ell_2/\ell_2$, the second stronger guarantee.

Previous work is summarized in Table \ref{tbl:table}. Our result is the following.

\begin{theorem}[main result, informal version]\label{thm:fourier_sparse_recovery_informal}
Let $n = p^d$ where both $p$ and $d$ are positive integers.
Let $x \in \C^{[p]^d}$.
Let $k \in \{1, \ldots, n\}$. 
Assume that $R^* \geq \|\wh{x}\|_\infty / \|\wh{x}_{-k}\|_2$ where $\log R^* = O(\log n)$ (signal-to-noise ratio).
There is an algorithm that takes $O(k \log k \log n)$ samples from $x$, runs in $\wt{O}( n )$ time, and outputs a $O(k)$-sparse vector $y$ such that
\begin{align*}
\| \wh{x} - y \|_{\infty} \leq \frac{1}{\sqrt{k}} \|\wh{x}_{-k}\|_2
\end{align*}
holds with probability at least $1 - 1 / \poly(n)$.
\end{theorem}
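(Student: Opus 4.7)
The plan is to combine three ingredients that together bypass both the RIP machinery and the filter-function machinery: (i) a pairwise-independent permutation of the frequency domain induced purely by operations on time-domain samples, (ii) an aliasing-based ``hard'' hashing of frequencies into $B=\Theta(k)$ buckets realized by subsampling in the time domain, and (iii) a bit-by-bit location primitive driven by phase information from $O(\log n)$ time shifts. Together these yield a one-shot heavy-hitter primitive which is then iterated $O(\log k)$ times.

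First I would set up the hashing. Draw a uniformly random invertible linear map $\Sigma\in\mathrm{GL}_d(\mathbb{Z}_p)$ and a random shift $b\in[p]^d$; the substitutions $t\mapsto\Sigma^{-1}t$ and $x_t\mapsto\omega^{b^\top t}x_t$ in the time domain permute $\widehat{x}$ by $\Sigma^\top$ and shift its support by $b$ in the frequency domain, while each sample of the modified signal is obtained from a single sample of $x$. Subsampling the resulting signal on a size-$n/B$ subgroup of $[p]^d$ then aliases frequencies into $B$ buckets with no leakage. Because the hash induced by $(\Sigma,b)$ is pairwise-independent on distinct frequencies, two fixed frequencies collide in the same bucket with probability $O(1/B)$; with $B=\Theta(k)$ this implies that a constant fraction of the top-$k$ heavy coordinates are isolated, and that the expected squared $\ell_2$ mass of tail entries aliasing into any fixed bucket is at most $\|\widehat{x}_{-k}\|_2^2/B$. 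Chebyshev then gives per-bucket noise of magnitude $O(\|\widehat{x}_{-k}\|_2/\sqrt{k})$ with constant probability.

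Next come location and estimation. To pinpoint a heavy hitter in an isolated bucket I repeat the hashing with $O(\log n)$ time shifts $\tau_j$; the bucket value then carries the heavy hitter multiplied by a phase $\omega^{f^\top\tau_j}$, and a standard bit-test argument decodes each of the $O(\log n)$ bits of $f$ robustly as long as the bucket noise is a constant fraction of $|\widehat{x}_f|$. For the $\ell_\infty/\ell_2$ guarantee I boost per-coordinate success to $1-1/\poly(n)$ by taking the coordinate-wise median of estimates from $O(\log n)$ independent hashings. Around this one-shot primitive I would run an outer loop of $O(\log k)$ rounds, subtracting the recovered frequencies from the implicit time-domain signal at the sampled coordinates so that each round peels off a further constant fraction of remaining heavy hitters. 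Sample accounting gives $B\cdot O(\log n)\cdot O(\log k)=O(k\log k\log n)$ samples, and the runtime is dominated by $\widetilde{O}(\log k)$ subsampled FFTs, for $\widetilde{O}(n)$ in total.

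The main obstacle is to run this iteration \emph{without} an RIP-style uniform isometry bound. In each round I need the head of the residual to shrink by a constant factor with high probability, simultaneously for all surviving heavy hitters, so that the bucket-noise bound propagates cleanly to the next round and eventually matches the $\ell_\infty/\ell_2$ target $\|\widehat{x}_{-k}\|_2/\sqrt{k}$. The right invariant is on the residual's tail mass: the ``missed'' heavy hitters across rounds must not accumulate into the tail, and this invariant must be paired with the pairwise-independence collision bound; a union bound over the $O(\log k)$ rounds and the $k$ target coordinates then delivers the claimed $1-1/\poly(n)$ success probability. A further subtlety is that for general $n=p^d$ (not a power of $2$) the available aliasing subgroups of $[p]^d$ are determined by the divisors of $p$, but the hashing argument uses only pairwise independence and should go through unchanged.
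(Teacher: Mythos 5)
Your proposal follows the hashing/filter paradigm (pseudorandom spectral permutation, aliasing into $B=\Theta(k)$ buckets, bit-test location, iterative peeling), which is precisely the machinery the paper sets out to avoid, and it runs into two concrete obstructions. First, the ``hard hashing with no leakage'' step requires a subgroup of $[p]^d$ of index $\Theta(k)$ on which to subsample, and such a subgroup need not exist: for $d=1$ and $p$ prime, $[p]$ has no nontrivial subgroups at all, so aliasing gives you either one bucket or $n$ buckets and nothing in between. This is not a technicality you can wave away with pairwise independence --- the theorem is claimed for arbitrary universe size, and the cases where the aliasing subgroup is missing are exactly the cases where one is forced back to approximate filters, which reintroduce either the $\log^d n$ sparsity loss or the $2^{O(d\log d)}$ loss of \cite{ik14}. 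Second, the peeling analysis is asserted rather than carried out. If each round isolates a fixed heavy coordinate only with constant probability, then after $O(\log k)$ rounds the expected number of unrecovered heavy hitters is $O(1)$, which by Markov gives constant --- not $1-1/\poly(n)$ --- success; boosting to $1-1/\poly(n)$ per coordinate within a round (by taking medians over the $O(\log n)$ hashings) changes the structure of the argument and makes it unclear why $O(\log k)$ rounds are needed at all. The invariant you identify as ``the main obstacle'' (that missed heavy hitters and per-round estimation errors do not accumulate in the residual's tail) is the entire difficulty of the $\ell_\infty/\ell_2$ analysis, and your proposal leaves it as a plan. You also never say where the assumption $\log R^*=O(\log n)$ enters your argument, whereas it must enter somewhere: unsubtracted heavy hitters of magnitude up to $\mu R^*$ contaminate buckets and bit tests until they are removed.

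For contrast, the paper does not hash or permute at all. It estimates \emph{every} frequency $f$ directly by $\wh{x}^{[T]}_f=\frac{\sqrt{n}}{|T|}\sum_{t\in T}\omega^{f^\top t}x_t$ for $O(k)$ uniformly random time samples $T$ (computable for all $f$ at once by FFT), shows the cross-terms $c^{[T]}_{f-f'}$ have second moment $1/B$ so the aggregate interference is $O(\|\wh{x}-y\|_2/\sqrt{B})$, takes coordinate-wise medians over $O(\log n)$ independent sample sets, and thresholds; one such \textsc{LinfinityReduce} call halves the $\ell_\infty$ norm of the residual $\wh{x}-y$. The $\log k$ factor comes from running $H=O(\log k)$ independent such stages to shrink the error by a factor of $k$, and the $\log R^*$ outer iterations \emph{reuse the same samples}; the resulting dependence between the recovered vector and the measurements is the crux, and it is controlled by projecting the iterate onto a randomly shifted grid of side $\Theta(\nu_\ell)$, which collapses the set of possible trajectories to $O(\log n)$ per level and makes a union bound feasible. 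Your proposal does not contain an analogue of either the direct per-frequency estimator or the random-shift/grid-projection argument, and in their absence the stated sample bound and success probability do not follow.
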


\begin{table}[!t]
\centering
  \begin{tabular}{ | l | l | l | l | l | l|}
    \hline
    {\bf Reference} & {\bf Samples} & {\bf Time} & {\bf Filter} & {\bf RIP} & {\bf Guarantee} \\ \hline
    \cite{gms05} & $k \log^{O(d)}n$ & $k \log^{ O(d)} n$ & Yes & No & $\ell_2/\ell_2$ \\ \hline
    \cite{ct06}& $k\log^6 n$ & $\poly(n)$ & No & Yes & $\ell_2/\ell_1$ \\ \hline
    \cite{rv08} & $k \log^2 k \log(k \log n) \log n$ & $\tilde{O}(n)$ & No & Yes & $\ell_2/\ell_1$\\ \hline
    \cite{hikp12a} & $k \log^d n \log (n/k)$ & $k \log^d n \log (n/k)$ & Yes & No & $\ell_2/\ell_2$ \\ \hline
    \cite{cgv13} & $k\log^3 k \log n$ & $\tilde{O}(n)$ & No & Yes & $\ell_2/\ell_1$\\ \hline
    \cite{ik14} & $2^{d \log d} k \log n $ & $\tilde{O}( n)$ & Yes & No & $\ell_{\infty}/\ell_2$ \\ \hline
    \cite{b14} & $k \log k \log^2 n$ & $\tilde{O}(n)$ & No & Yes & $\ell_2/\ell_1$ \\ \hline
    \cite{hr16} & $k\log^2 k \log n$ & $\tilde{O}(n)$ & No & Yes & $\ell_2/\ell_1$ \\ \hline
    \cite{k16,k17} & $2^{d^2} k \log n  $ & $2^{d^2}k \log^{d+O(1)} n $ & Yes & No & $\ell_2/\ell_2$ \\ \hline
    \cite{kvz19} & $k^3 \log^2 k \log^2 n$ & $k^3 \log^2 k \log^2 n$ & Yes & Yes & Exactly $k$-sparse\\ \hline
    Theorem~\ref{thm:fourier_sparse_recovery_informal} & $k \log k \log n$ & $\tilde{O}(n)$ & No & No & $\ell_{\infty}/\ell_2$ \\ \hline
  \end{tabular}\caption{$n = p^d$. We ignore the $O$ for simplicity. The $\ell_{\infty}/\ell_2$ is the strongest possible guarantee, with $\ell_2/\ell_2$ coming second, $\ell_2/\ell_1$ third and exactly $k$-sparse being the weaker. We also note that all \cite{rv08,cgv13,b14,hr16} obtain improved analyses of the Restricted Isometry property; the algorithm is suggested and analyzed (modulo the RIP property) in \cite{bd08}. 
  The work in \cite{hikp12a} does not explicitly state the extension to the $d$-dimensional case, 
  but can easily be inferred from the arguments. \cite{hikp12a,ik14,k16,kvz19} work when the universe size in each dimension are powers of $2$. 
  We also assume that the signal-to-noise ratio is bounded by a polynomial of $n$, which is a standard assumption in the sparse Fourier transform literature \cite{hikp12a,ik14,k16,k17,ln19}.
} \label{tbl:table}
\end{table}

\paragraph{Comparison between $\ell_\infty/\ell_2$ and $\ell_2/\ell_2$ (or $\ell_2/\ell_1$).}
For the sake of argument, we will consider only the $\ell_2/\ell_2$ guarantee which is stronger than $\ell_2/\ell_1$.
The $\ell_2/\ell_2$ guarantee is the following: for $ \wh{x} \in \mathbb{C}^n$ one should output a $z$ such that $\|\wh{x} - z\|_2 \leq C  \| \wh{x}_{-k} \|_2$, where $C>1$ is the approximation factor. The $\ell_2/\ell_2$ guarantee can be immediatelly obtained by $\ell_\infty/\ell_2$ guarantee by truncating $z$ to its top $k$ coordinates. 
Consider $C=1.1$ \footnote{This is the case with the \textsc{RIP} based approaches, which obtain $\ell_2/\ell_1$. 
In fact, many filter-based algorithms facilitate $(1+\epsilon)$ on the right-hand side, with the number of measurements being multiplied by $\epsilon^{-1}$. 
By enabling the same dependence on $\epsilon^{-1}$ our algorithm facilitates a multiplicative $\epsilon$ factor on the right-hand side of the $\ell_\infty/\ell_2$, which makes it much stronger. Thus, a similar argument can go through.}, and think of the following signal: for a set $S$ of size $0.05k$ we have $|\wh{x}_i| =\frac{2}{\sqrt{k}} \|\wh{x}_{\bar{S}}\|_2$. 
Then the all zeros vectors is a valid solution for the $\ell_2/\ell_2$ guarantee, since 
\begin{align*}
\|\vec 0 - \wh{x}\|_2^2  = \|\wh{x}_S\|_2^2 + \|\wh{x}_{\bar{S}}\|_2^2 =  0.05 k \cdot \frac{4}{ k} \|\wh{x}_{\bar{S}}\|_2^2 + \|\wh{x}_{\bar{S}}\|_2^2 = 1.2\| \wh{x}_{\bar{S}}\|_2^2 < 1.1^2 \|\wh{x}_{\bar{S}}\|_2^2.
\end{align*}

It is clear that since $\vec 0$ is a possible output, we may not recover any of the coordinates in $S$, which is the set of ``interesting'' coordinates. 
On the other hand, the $\ell_{\infty}/\ell_2$ guarantee does allow the recovery of every coordinate in $S$. 
This is a difference between recovering all $0.05k$ versus $0$ coordinates. 
From the above discussion, one can conclude in the case where there is too much noise, $\ell_2/\ell_2$ becomes much weaker than $\ell_\infty/\ell_2$, and can be even meaningless. 
Thus, $\ell_\infty/\ell_2$ is highly desirable, whenever it is possible. 
The same exact argument holds for $\ell_2/\ell_1$.

\begin{remark}
We note that \cite{ct06,rv08,cgv13,b14,hr16} obtain a uniform guarantee, i.e. with $1-1/\mathrm{poly}(n)$ they allow reconstruction of all vectors; $\ell_\infty/\ell_2$ and $\ell_2/\ell_2$ are impossible in the uniform case, see \cite{cdd09}. We note that our comparison between the guarantees is in terms of the quality of approximation. With respect to that, $\ell_\infty/\ell_2$ is the strongest one. 
\end{remark}

\subsection{Summary of previous Filter function based technique}



One of the two ways to perform Fourier sparse recovery is by trying to implement arbitrary linear measurements, with algorithms similar to the ubiquitous \textsc{CountSketch} \cite{ccf02}. 
In the general setting \textsc{CountSketch} hashes every coordinate to one of the $O(k)$ buckets, and repeats $O( \log n)$ times with fresh randomness. 
Then, it is guaranteed that every heavy coordinate will be isolated, and the contribution from non-heavy elements is small. 
To implement this in the Fourier setting becomes a highly non-trivial task however: one gets access only to the time-domain but not the frequency domain. 
One natural way to do this is to exploit the convolution theorem and find a function which is sparse in the time domain and approximates the indicator of an interval (rectangular pulse) in the frequency domain; these functions are called (bandpass) filters. 
Appropriate filters were designed in \cite{hikp12a,hikp12b}: they were very good approximations of the rectangular pulse, i.e. the contribution from elements outside the passband zone contributed only by $1/\poly( n )$ their mass. 
These filters had an additional $\log n$ factor (in one dimension) in the sparsity of the time domain and they are sufficient for the purposes of \cite{hikp12a}, but in high dimensions this factor becomes $\log^d n$. 
Filters based on the Dirichlet kernel give a better dependence in terms of sparsity and dimension (although still an exponential dependence on the latter), but the leak to subsequent buckets, i.e. coordinates outside the passband zone contribute a constant fraction of their mass, in contrast to the filter used in \cite{hikp12a}. 
Thus one should perform additional denoising, which is a non-trivial task. 
The seminal work of Indyk and Kapralov \cite{ik14} was the first that showed how to perform sparse recovery with these filters, and then Kapralov \cite{k16,k17} extended this result to run in sublinear time. 
Note that any filter-based approach with filters which approximate the $\ell_\infty$ box suffers from the curse of dimensionality. 
\cite{kvz19} devised an algorithm which avoids the curse of dimensionality by using careful aliasing, but it works in the noiseless case and has a cubic dependence on $k$. 

\subsection{RIP property-based algorithms: a quick overview}

We say the matrix $A \in \C^{m \times n}$ satisfies RIP (Restricted Isometry Property \cite{ct05}) of order $k$ if for all $k$-sparse vectors $x \in \C^n$ we have $\|A x\|_2^2 \approx \|x \|_2^2$. 
A celebrated result of Candes and Tao \cite{ct06} shows that Basis Pursuit ($\ell_1$ minimization) suffices for sparse recovery, as long as the samples from the time domain satisfy RIP. 

In \cite{ct06} it was also proved using generic chaining that random sampling with oversampling factor $O(\log^6 n)$ gives RIP property for any orthonormal matrix with bounded entries by $1/\sqrt{n}$. 
Then \cite{rv08} improved the bound to $O(k \cdot \log^2 k \cdot \log (k \log n) \cdot \log n)$ and \cite{cgv13} improved it to $O(k \cdot \log^3k \cdot \log n)$. 
Subsequent improvement by Bourgain \cite{b14} has lead to $O(k \log k \cdot \log^2 n)$ samples, improved by Haviv and Regev to $O(k \log^2 k \cdot \log n)$\cite{hr16}. 
The fastest set of algorithms are iterative ones: for example Iterative Hard Thresholding \cite{bd09} or CoSaMP \cite{nt09} run in $O(\log n)$ iterations\footnote{To be precise, their running time is logarithmic in the signal-to-noise ratio.} and each iteration takes $\wt{O}(n)$ time.

We note the recent lower bound of \cite{r19}: a subsampled Fourier matrix that satisfies the RIP properties should have $\Omega(k \log k \cdot d)$ rows\footnote{\cite{bllm19} independently gives a similar bound for $d=\log n$.}. 
This bound is particularly useful in high dimensions, since it deteriorates to a trivial bound in low dimensions. 
We still believe though that a bound of $\Omega( k \log k \log n)$ should hold in all dimensions. 
Thus, what remains is to obtain the $\ell_2/\ell_2$ guarantee by giving a tighter analysis, and removing the one $\log k$ factor to match the lower bound, but our algorithm already allows Fourier sparse recovery with these number of samples, even with a stronger guarantee.

\subsection{Overview of our technique}

Let $x \in \C^{[p]^d}$ denote our input signal in the time domain. 
In the following we assume the knowledge of $\mu = \frac{1}{\sqrt{k}}\|\wh{x}_{-k}\|_2$ and $R^*$ which is an upper bound of $\|\wh{x}\|_\infty / \mu$, and bounded by $\poly( n )$.
These are standard assumption \cite{hikp12a,ik14,k16,k17,ln19} in the sparse Fourier transform literature. 
The bound on $R^*$ is useful for bounding the running time (or the number of measurements in \cite{hikp12a}) and in any of \cite{hikp12a,ik14,k16,k17,ln19} a $\log n$ can be substituted by $\log R^*$ in the general case, which is also the case for our algorithm. 
We note that our algorithm will be correct with probability $1-1/\poly(n)$ whenever $R^* < 2^{n^{100}}$; this is fine for every reasonable application. It might seem counter-intuitive that we need this upper bound on $R^*$, since intuitively larger signal to noise ratio should only help. However, this is an artifact of the techniques of Sparse Fourier Transform in general, either they are iterative or not.
We assumed the rounding errors in FFT computation to be negligible, similarly to Remark 3.4 in \cite{ik14}.

\subsubsection{Estimators and random shifts}

Consider the simplest scenario: $d=1$, $p$ is a prime number and a $1$-sparse signal $\wh{x}$ which is $1$ on some frequency $f^*$. 
From a sample $x_t$ in the time-domain what would be the most reasonable way to find $f^*$? For every $f \in [p]$ we would compute 
\begin{align*}
\sqrt{n} \omega^{ft}x_t = \sqrt{n} \omega^{ft} \cdot \frac{1}{\sqrt{n}} \sum_{f'\in [p]} \omega^{-f't} \wh{x}_{f'} = \omega^{(f-f^*)t},
\end{align*}
and keep, for $ t\neq 0$, the frequency that gives a real number. 
Since $(f-f^*)t$ will be zero only for $ f= f^*$, we are guaranteed correct recovery. In the noisy and multi-dimensional case or $p$ is an arbitrary integer, however, this argument will not work, because of the presence of contribution from other elements and the fact that $(f-f^*)^\top t$ can be zero modulo $p$ for other frequencies apart from $f$. 
However, we can take a number of samples $t$ and average $\sqrt{n}\omega^{f^\top t}$, and hope that this will make the contribution from other frequencies small enough, so that we can infer whether $f$ corresponds to a heavy coordinate or not. 
More specifically, we pick a set $T$ of frequencies uniformly at random from $[p]^d$ and compute
\begin{align*}
\frac{\sqrt{n}}{|T|}\sum_{t \in T} \omega^{f^\top t }x_t 
\end{align*}
for all frequencies $f$. 
We show that if $|T| = O(k)$ our estimator is good on average (and later we will maintain $O(\log n)$ independent instances and take the median to make sure with probability $1 - 1 / \poly(n)$ the estimators for all the frequencies are good), and in fact behaves like a crude filter, similarly to the ones used in \cite{ik14}, in the sense that every coordinate contributes a non-trivial amount to every other coordinate. 
However, these estimators do not suffer from the curse of dimensionality and our case is a little bit different, requiring a quite different handling. 
The main reason is that in contrast to the filters used in \cite{ik14}, there is not an easy way to formulate an isolation argument from heavy elements that would allow easy measurement re-use, like Definition 5.2 and Lemma 5.4 from \cite{ik14}. 
Buckets induced by filter functions have a property of locality, since they correspond to approximate $\ell_\infty$ boxes (with a polynomial decay outside of the box) in $[p]^d$: the closer two buckets are the more contribute the elements of one into the other. 
Our estimators on the other side do not enjoy such a property. 
Thus, one has to proceed via a different argument. 

In what follows, we will discuss how to combine the above estimators with an iterative loop that performs denoising, i.e. removes the contribution of every heavy element to other heavy elements.

\paragraph{Random shifts.} Our approach for performing denoising is quite general, and is clean on a high-level. Let $S$ be the set of the large coordinates of $\wh{x}$, i.e. those with magnitude at least $(1/\sqrt{k}) \|\wh{x}_{-k}\|_2$. We are going to estimate $\wh{x}_f$ for $f \in [p]^d$ using the estimators introduced in the previous paragraphs. Then, for those frequencies $f$ for which the values obtained are sufficiently large (larger than $\|x\|_\infty\cdot 2^{-\ell}$) , we are going to implicitly subtract that value from $\wh{x}_f$; this corresponds to updating the signal, a common trick in the Sparse Fourier literature. Then we shall iterate, re-using the same samples again till the signal to noise ratio becomes small enough. It can be shown that only coordinates in $S$ are ever updated. The trick for sample reuse in our case is the following: in the $\ell$th iteration we approximate $\mathbb{C}^{[p]^d}$ by an appropriate grid of side length $\beta\|x\|_\infty\cdot 2^{-\ell}$, where $\beta$ is an absolute constant, and then keep $O(\log n)$ random shifts of it.  Keeping $O(\log n)$ randomly shifted grids one can show that in every iteration a nice property is satisfied: we can guarantee that there exists a grid such that for all $f \in S$, $\wh{x}_f$ and its estimator round to the same point. Projecting onto that grid, what our algorithm shows is that the signal under update follows a predictable trajectory, something that allows us to argue about the correctness of the algorithm without taking an intractable union-bound over all possible trajectories the algorithm could evolve. In essence, our algorithm shows that we can at $\ell$th step compute every $\wh{x}_f$ up to $\|\wh{x}\|_\infty 2^{-\ell}$ error.  

\paragraph{A high-level explanation.}
Let us try to combine the previous two ideas. Assume that at iteration $\ell$ we have an approximation of $\wh{x}_f$ for all $f \in S$ up to $\|x\|_{\infty}\cdot 2^{-\ell}$. If we were to pick $O(k \log n \log k)$ fresh samples, we could, using our estimators, approximate $\wh{x}_f$ up to $(1/k) \|x\|_{\infty}\cdot 2^{-\ell}$. Let that approximation be $y$. We then round $y$ to $O(\log n)$ randomly shifted grids of diameter $ \|x\|_{\infty}\cdot 2^{-\ell}$. A probabilistic argument shows that, due to our choice of parameters, with high probability there exists a grid such that $\wh{x}_f$ and $y_f$ are rounded to the same grid point (the additional $O(\log k)$ factor in the sample complexity is what makes this argument go through); and we can also decide which grid this is! Thus, we safely project $y$ and be sure that what we now have at our hands is $\wh{x}$ projected onto that grid. Thus, in the next iteration $\ell+1$ we only need to argue correctness for at most $O(\log n)$ vectors, that is, vectors $\wh{x}$ rounded on one of the aforementioned $O(\log n)$ grids. This dramatically decreases the number of events we have to analyze, and we can set up an inductive argument that guarantees the correctness of the algorithm. Note that there is no independence issue, since the randomness between the samples taken (used for the estimators) and the projection onto the randomly shifted grids is not shared.\newline

We first implement a procedure which takes $O(k \log n)$ uniform random measurements from $x$ and has the guarantee that for any $\nu \geq \mu$ any $y \in \C^{[p]^d}$ where $\|\wh{x} - y \|_\infty \leq 2 \nu$ and $y$ is independent from the randomness of the measurements, the procedure outputs a $O(k)$-sparse $z \in \C^{[p]^d}$ such that $\|\wh{x} - y - z\|_\infty \leq \nu$ with probability $1 - 1 / \poly(n)$. 

\begin{lemma} [\textsc{LinfinityReduce} procedure, informal]\label{lem:locate_and_estimate_informal}
Let $\mu = \frac{1}{\sqrt{k}} \|\wh{x}_{-k}\|_2$, and $\nu \geq \mu$.
Let ${\cal T}^{(0)}$ be a list of $O(k \log n)$ i.i.d. elements in $[p]^d$. 
Let $S$ be top $O(k)$ coordinates in $\wh{x}$. 
There is a procedure that takes $\{x_t\}_{t \in {\cal T}}$, $y \in \C^{[p]^d}$ and $\nu$ as input, runs in $\wt{O}(n)$ time, and outputs $z \in \C^{[p]^d}$ so that if $\| \wh{x} - y \|_\infty \leq 2 \nu$, 
$\supp(y) \subseteq S$ and $y$ is independent from the randomness of ${\cal T}^{(0)}$, then $\| \wh{x} - y - z \|_\infty \leq \nu$ and $\supp(z) \subseteq S$ with probability $1 - 1 / \poly(n)$ under the randomness of ${\cal T}^{(0)}$.
\end{lemma}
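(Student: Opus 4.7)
The plan is to build $z$ from a direct averaging estimator of the residual $\hat{x}-y$, amplify confidence via a median-of-means construction, and enforce $\supp(z)\subseteq S$ by thresholding. Split $\mathcal{T}^{(0)}$ into $B=\Theta(\log n)$ batches $T_1,\dots,T_B$ of size $|T_b|=\Theta(k)$, precompute the time-domain signal $\tilde{y}$ defined by $\tilde{y}_t=\frac{1}{\sqrt{n}}\sum_{f}\hat{y}$-style inverse DFT of $y$ (which takes $\tilde{O}(n)$ since $y$ is $O(k)$-sparse), and for every $f\in[p]^d$ and batch $b$ define
\[
u^{(b)}_f \;=\; \frac{\sqrt{n}}{|T_b|}\sum_{t\in T_b}\omega^{f^\top t}\bigl(x_t-\tilde{y}_t\bigr),\qquad u_f=\median_b u^{(b)}_f.
\]
Since the $t$'s are uniform in $[p]^d$, a one-line calculation gives $\E[u^{(b)}_f]=\hat{x}_f-y_f$, and by Parseval $\Var(u^{(b)}_f)\le \|\hat{x}-y\|_2^2/|T_b|$.

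The key variance bound uses the hypotheses: because $\supp(y)\subseteq S$ with $|S|=O(k)$, $\|\hat{x}-y\|_\infty\le 2\nu$, and $\|\hat{x}_{-k}\|_2=\sqrt{k}\mu\le\sqrt{k}\nu$, I split
\[
\|\hat{x}-y\|_2^2\;=\;\sum_{f\in S}|\hat{x}_f-y_f|^2+\sum_{f\notin S}|\hat{x}_f|^2\;\le\;4|S|\nu^2+k\nu^2\;=\;O(k\nu^2),
\]
so $\Var(u^{(b)}_f)=O(\nu^2)$ with a hidden constant tunable by the choice of $|T_b|$. Chebyshev then yields $\Pr[|u^{(b)}_f-(\hat{x}_f-y_f)|>\nu/4]\le 1/3$, and a standard Chernoff bound on the median over the $B=\Theta(\log n)$ independent batches drives the per-frequency failure probability below $n^{-C}$ for any desired constant $C$. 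Because $y$ is independent of $\mathcal{T}^{(0)}$, these deviation bounds are honest, and a union bound over all $n$ frequencies gives a single ``good'' event on which $|u_f-(\hat{x}_f-y_f)|\le\nu/4$ simultaneously for every $f\in[p]^d$, with probability $1-1/\poly(n)$.

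Finally I define
\[
z_f=\begin{cases}u_f,&|u_f|>3\nu/4,\\ 0,&\text{otherwise.}\end{cases}
\]
To argue $\supp(z)\subseteq S$, I take the hidden constant in $|S|=\Theta(k)$ large enough that for $f\notin S$ the ranking estimate gives $|\hat{x}_f|\le\mu/2\le\nu/2$, hence on the good event $|u_f|\le\nu/2+\nu/4=3\nu/4$, so $z_f=0$. For $f\in S$: if $z_f=u_f$ then $|\hat{x}_f-y_f-z_f|\le\nu/4$; if $z_f=0$ then $|\hat{x}_f-y_f|\le|u_f|+\nu/4\le\nu$. Either way the $\ell_\infty$ error is at most $\nu$. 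The runtime is dominated by computing $\tilde{y}$ (one $n$-point inverse FFT, since $y$ is $O(k)$-sparse) and by evaluating $u^{(b)}$ for each batch (an $n$-point FFT of a vector supported on $T_b$), giving $\tilde{O}(n)$ overall.

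\textbf{Main obstacle.} The delicate point is calibrating the thresholding rule so that the two constraints $\supp(z)\subseteq S$ and $\|\hat{x}-y-z\|_\infty\le\nu$ can both be met on the \emph{same} good event; the tension is that $\mu$ may be as large as $\nu$, leaving no margin between the largest $|u_f|$ with $f\notin S$ and the smallest $|u_f|$ one would like to keep. This is why $S$ must be defined with a sufficiently large constant blowup of the top $k$, so that the $(|S|{+}1)$-th largest coordinate of $\hat{x}$ is bounded by a strict fraction of $\mu$, thereby creating the window $(\mu/2+\nu/4,\,\nu-\nu/4)$ in which the threshold can be safely placed.
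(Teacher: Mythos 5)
Your proposal is correct and follows essentially the same route as the paper's proof of its \textsc{LinfinityReduce} guarantee: the identical median-of-means estimator $\frac{\sqrt{n}}{|T_b|}\sum_{t\in T_b}\omega^{f^\top t}(x_t-\tilde y_t)$, the same second-moment/Chebyshev bound on the noise term using $\supp(y)\subseteq S$, $\|\wh{x}-y\|_\infty\le 2\nu$ and the tail bound $\|\wh{x}_{\ov S}\|_2\le\sqrt{k}\mu$, followed by Chernoff for the median, a union bound over all $n$ frequencies, and a threshold-based case analysis (the paper uses threshold $\nu/2$ with per-coordinate accuracy $0.17\nu$ and $|S|=26k$; you use $3\nu/4$ with accuracy $\nu/4$). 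The only detail to pin down is that the median of complex values must be taken coordinate-wise (real and imaginary parts separately), which costs a $\sqrt{2}$ in the accuracy and is absorbed by retuning constants.
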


Namely, we can take $O(k \log n)$ measurements and run the procedure in Lemma~\ref{lem:locate_and_estimate_informal} to reduce (the upper bound of) the $\ell_\infty$ norm of the residual signal by half.
We call the procedure in Lemma~\ref{lem:locate_and_estimate_informal} \textsc{LinfinityReduce} procedure.
More generally, we can take $O(H \cdot k \log n)$ measurements and run the \textsc{LinfinityReduce} procedure $H$ times to reduce the $\ell_\infty$ norm of the residual signal to $1 / 2^H$ of its original magnitude, with failure probability at most $1 / \poly(n)$.
Note that if we set $H = \log R^*$, we have already obtained a 
m taking $O(k \log n \log R^*)$ measurements, because we can drive down (the upper bound of) the $\ell_\infty$ norm of the residual signal from $\|\wh{x}\|_\infty$ to $\mu$ in $\log R^*$ iterations. 

\subsubsection{$O(k \log n)$ samples for $k = O(\log n)$}\label{sec:warmup}
We first discuss a measurement reuse idea that leads us to a sparse recovery algorithm (Algorithm~\ref{alg:fourier_sparse_recovery_k_logn}) taking $O(k \log n)$ measurements for $k = O(\log n)$.
We set $H = 5$, and let ${\cal T} = \{{\cal T}^{(1)}, \ldots, {\cal T}^{(H)}\}$, where each ${\cal T}^{(h)}$ is a list of $O(k \log n)$ i.i.d. elements in $[p]^d$.
Note that ${\cal T}^{(1)}, \ldots, {\cal T}^{(H)}$ are independent.
In our sparse Fourier recovery algorithm, we will measure $x_t$ for all $t \in {\cal T}$.

In a nutshell, our approach finely discretizes the space of possible trajectories the algorithm could evolve, and carefully argues about the correctness of the algorithm by avoiding the intractable union-bound over all trajectories.

\paragraph{Recovery algorithm.} The recovery algorithm proceeds in $\log R^* - H + 1$ iterations, where each iteration (except the last iteration) the goal is to reduce the upper bound of $\ell_\infty$ norm of the residual signal by half. 
Initially, the upper bound is $R^*$.
It is important to note that we use the same measurements ${\cal T} = \{{\cal T}^{(1)}, \ldots, {\cal T}^{(H)}\}$ in all of these $\log R^* - H + 1$ iterations.

In the following, we will describe one iteration of the recovery algorithm. 
Let $y \in \C^{[p]^d}$ denote the sparse vector recovered so far, and let the upper bound of $\|\wh{x} - y\|_\infty$ be $2 \nu$. 
Running the \textsc{LinfinityReduce} procedure $H$ times where in the $h$-th time we use measurements in ${\cal T}^{(h)}$, we obtain a $O(k)$-sparse $z$ such that with probability $1 - 1 / \poly(n)$, $\|\wh{x} - y - z\|_\infty \leq 2^{1-H} \nu \leq 0.1 \nu$ (we call such $z$ a desirable output by the \textsc{LinfinityReduce} procedure).
Instead of taking $y+z$ as our newly recovered sparse signal, for each $f \in \supp(y+z)$, we project $y_f+z_f$ to the nearst points in $\mathcal{G}_{0.6 \nu}:= \{0.6 \nu (x + y \i): x,y \in \Z \}$ and assign to $y'_f$, where $y'$ denotes our newly recovered sparse signal.
For all $f \not\in \supp(y+z)$, we let $y'_f = 0$.

To simplify our exposition, here we introduce some notations.
We call $\mathcal{G}_{0.6 \nu}$ a grid of side length $0.6 \nu$, and we generalize the definition to any side length.
Namely, for any $r_g > 0$, let grid $\mathcal{G}_{r_g}:= \{r_g(x + y \i): x,y \in \Z \}$.
Moreover, we define $\Pi_{r_g}: \C \rightarrow \mathcal{G}_{r_g}$ to be the mapping that maps any element in $\C$ to the nearest element in $\mathcal{G}_{r_g}$.
Now we can write $y'$ as
\begin{align*}
y'_f = 
\begin{cases} 
  \Pi_{0.6 \nu}(y_f + z_f), & \text{~if~} f \in \supp(y+z); \\ 
  0, & \text{~if~} f \not\in \supp(y+z).
\end{cases} 
\end{align*}

At the end of each iteration, we assign $y'$ to $y$, and shrink $\nu$ by half. In the last iteration, we will not compute $y'$, instead we output $y+z$. 
We present the algorithm in Algorithm~\ref{alg:fourier_sparse_recovery_k_logn}.

\begin{algorithm}[!t]\caption{Fourier sparse recovery by projection, $O(k \log n)$ measurements when $k = O(\log n)$}\label{alg:fourier_sparse_recovery_k_logn}
\begin{algorithmic}[1]
\Procedure{\textsc{FourierSparseRecoveryByProjection}}{$x,n,k,\mu,R^*$} \Comment{Section~\ref{sec:warmup}} 
  \State {\bf Require} that $\mu = \frac{1}{\sqrt{k}} \|\wh{x}_{-k}\|_2$ and $R^* \geq \|\wh{x}\|_\infty\ / \mu$ 
  \State $H \leftarrow 5$, $\nu \leftarrow \mu R^* / 2$, ~$y \leftarrow \vec 0$ \Comment{$y \in \C^{[p]^d}$ refers to the sparse vector recovered so far}
  \State Let ${\cal T} = \{ {\cal T}^{(1)}, \cdots, {\cal T}^{(H)} \}$ where each ${\cal T}^{(h)}$ is a list of i.i.d. uniform samples in $[p]^d$ 
  \While {{\bf true}} 
    \State $\nu' \leftarrow 2^{1-H} \nu$
    \State Use $\{x_t\}_{t\in \mathcal{T}}$ to run the \textsc{LinfinityReduce} procedure (in Lemma~\ref{lem:locate_and_estimate_informal}) $H$ times (use samples in ${\cal T}^{(h)}$ for each $h\in [H]$ ), and finally it finds $z$ so that $\|\wh{x} - y - z\|_\infty \leq \nu'$ 
    \State{{\bf if} {$\nu' \leq \mu$} } {\bf then} \Return $y+z$ \Comment{We found the solution}
    \State $y' \leftarrow \vec 0$
    \For{$f \in \supp(y+z)$}
      \State $y'_f \leftarrow \Pi_{0.6 \nu}(y_f + z_f)$ \Comment{We want $\|\wh{x} - y' \|_\infty \leq \nu$ and the depend-}
    \EndFor \Comment{ence between $y'$ and $\mathcal{T}$ is under control~~~}
    \State $y \leftarrow y'$, $\nu \leftarrow \nu / 2$
  \EndWhile
\EndProcedure
\end{algorithmic}
\end{algorithm}

\paragraph{Analysis.} We analyze $y'$ conditioned on the event that $\|\wh{x} - y - z\|_\infty \leq 0.1 \nu$ (i.e. $z$ is a desirable output by the \textsc{LinfinityReduce} procedure, which happens with probability $1 - 1 / \poly(n)$).
We will prove that $y'$ has two desirable properties: (1) $\|\wh{x} - y'\|_\infty \leq \nu$; (2) the dependence between $y'$ and our measurements ${\cal T}$ is under control so that after taking $y'$ as newly recovered sparse signal, subsequent executions of the \textsc{LinfinityReduce} procedure with measurements ${\cal T}$ still work with good probability. 
Property (1) follows from triangle inequality and the fact that $\|\wh{x} - (y+z)\|_\infty \leq 0.1 \nu$ and $\|(y+z) - y'\|_\infty \leq 0.6 \nu$.
We now elaborate on property (2). 
We can prove that for any $f \in [p]^d$, 
\begin{align*}
y'_f \in \big\{\Pi_{0.6\nu}(\wh{x}_f+0.1\nu(\alpha+ \beta\i)): \alpha, \beta \in \{-1, 1\} \big\}.
\end{align*}
Let $S$ denote top $26 k$ coordinates (in absolute value) of $\wh{x}$.
We can further prove that for any $f \in \ov{S}$, $y'_f =0$.
Therefore, the total number of possible $y'$ is upper bounded by $4^{|S|} = 4^{O(k)}$.
If $k = O(\log n)$, we can afford union bounding all $4^{O(k)} = \poly(n)$ possible $y'$, and prove that with probability $1 - 1 / \poly(n)$ for all possible value of $y'$ if we take $y'$ as our newly recovered sparse signal then in the next iteration the \textsc{LinfinityReduce} procedure with measurements ${\cal T}$ gives us a desirable output.

\paragraph{Sufficient event.}
More rigorously, we formulate the event that guarantees successful execution of Algorithm~\ref{alg:fourier_sparse_recovery_k_logn}.
Let ${\cal E}_1$ be the event that for all $O(\log R^*)$ possible values of $\nu \in \{ \mu \frac{ R^*}{2}, \mu \frac{ R^*}{4}, \ldots, \mu 2^{H-1} \}$,
for all possible vector $y$ where $y_f = 0$ for $f \in \ov{S}$ and $y_f \in \{ \Pi_{0.6 \nu} (\wh{x}_f + 0.1 \nu(\alpha + \beta \i)): \alpha, \beta \in \{-1, 1\} \}$ for $f \in S$ (we also need to include the case that $y = \vec 0$ for the success of the first iteration), 
running the \textsc{LinfinityReduce} procedure (in Lemma~\ref{lem:locate_and_estimate_informal}) $H$ times (where in the $h$-th time measurements $\{x_t\}_{t \in {\cal T}^{(h)}}$ are used to reduce the error from $2^{2-h} \nu$ to $2^{1-h} \nu$) finally gives $z$ so that $\|\wh{x} - y - z\|_\infty \leq 2^{1-H} \nu$.
The randomness of ${\cal E}_1$ comes from ${\cal T} = \{ {\cal T}^{(1)}, \ldots, {\cal T}^{(H)} \}$.

First, event ${\cal E}_1$ happens with probability $1 - 1 / \poly(n)$. This is because there are $4^{O(k)} \log R^*$ possible combinations of $\nu$ and $y$ to union bound, and each has failure probability at most $1 / \poly(n)$.
For $k = O(\log n)$, and any $R^* < 2 ^{n^{100}}$ this gives the desired result.
Second, conditioned on event ${\cal E}_1$ happens, Algorithm~\ref{alg:fourier_sparse_recovery_k_logn} gives correct output.
This can be proved by a mathematical induction that in the $t$-th iteration of the while-true loop in Algorithm~\ref{alg:fourier_sparse_recovery_k_logn}, $\|\wh{x} - y\|_\infty \leq 2^{-t} \mu R^*$.

\subsubsection{$O(k \log k\log n)$ samples suffice}

We first introduce some notations.
For any $r_g > 0$, define the grid $\mathcal{G}_{r_g}:= \{r_g(x + y \i): x,y \in \Z \}$.
Moreover, we define $\Pi_{r_g}: \C \rightarrow \mathcal{G}_{r_g}$ to be the mapping that maps any element in $\C$ to the nearest element in $\mathcal{G}_{r_g}$.

\paragraph{Using random shift to reduce projection size.}


We introduce the random shift trick, the property of which is captured by Lemma~\ref{lem:random_shift_informal}. To simplify notation, for any $r_b > 0$ and $c \in \C$ we define box $\mathcal{B}_\infty(c, r_b) := \{c + r_b(x+y \i): x,y \in [-1,1]\}$.
For any $S \subseteq \C$, let $\Pi_{r_g}(S) = \{\Pi_{r_g}(c) : c \in S\}$.

\begin{lemma} [property of a randomly shifted box, informal] \label{lem:random_shift_informal} 
Take a box of side length $2r_b$ and shift it randomly by an offset in $\mathcal{B}_\infty(0, r_s)$ (or equivalently, $[-r_s, r_s] \times [-r_s, r_s]$) where $r_s \geq r_b$. Next round every point inside that shifted box to the closest point in $G_{r_g}$ where $r_g \geq 2 r_s$. Then, with probability at least $(1 - r_b/r_s)^2$ everyone will be rounded to the same point.
\end{lemma}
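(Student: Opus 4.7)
The plan is to reduce to one dimension via independence and then use a short covering argument. The random offset drawn uniformly from $\mathcal{B}_\infty(0, r_s)$ has two independent coordinates $(s_x, s_y)$, each uniform on $[-r_s, r_s]$. Moreover, the nearest-point map $\Pi_{r_g}$ onto $\mathcal{G}_{r_g}$ factors: two complex numbers map to the same grid point iff their real parts round to the same element of $r_g \mathbb{Z}$ and their imaginary parts do likewise. Hence the event that the whole shifted box rounds to a single grid point is the intersection of two independent one-dimensional events, and it suffices to establish the one-dimensional statement with probability at least $1 - r_b/r_s$ and then multiply.

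In the one-dimensional problem, we have an interval $I = [c - r_b, c + r_b]$ shifted by $s$ uniform on $[-r_s, r_s]$, and the Voronoi cells of $r_g \mathbb{Z}$ are the intervals $[(k - 1/2)r_g, (k + 1/2)r_g]$ of length $r_g$. Containment of $I + s$ in one such cell is equivalent to the shifted center $c + s$ being at distance at least $r_b$ from every boundary point $(k + 1/2) r_g$. Setting
\begin{equation*}
B := \bigcup_{k \in \mathbb{Z}} \bigl[(k + 1/2) r_g - r_b,\ (k + 1/2) r_g + r_b\bigr],
\end{equation*}
the failure probability equals $\mathrm{Leb}\bigl((c + [-r_s, r_s]) \cap B\bigr) / (2 r_s)$.

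The crucial geometric observation is that consecutive components of $B$ are $r_g$ apart, whereas the shift window $c + [-r_s, r_s]$ has length $2 r_s \leq r_g$. Therefore this window meets at most one component of $B$; any such intersection is contained in a single interval of length $2 r_b$, so its Lebesgue measure is at most $2 r_b$. Dividing by the window length $2 r_s$ gives failure probability at most $r_b / r_s$, establishing the one-dimensional claim, and multiplying over the two independent coordinates yields the stated bound $(1 - r_b/r_s)^2$.

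I do not anticipate a real obstacle; the only subtlety worth flagging is why the bound is $(1 - r_b/r_s)^2$ rather than the weaker $(1 - 2 r_b / r_g)^2$ that a naive density-over-a-period argument would suggest. The sharper bound is available precisely because the shift window of length $2 r_s$ need not span a full grid period, and in the worst case it can entirely contain a single bad component of $B$, giving denominator $2 r_s$ rather than $r_g$; this is also where the hypothesis $r_g \geq 2 r_s$ enters in an essential way, since it forces the window to meet at most one boundary.
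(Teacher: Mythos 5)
Your approach is the same as the paper's: factor the event coordinatewise using the independence of $s_x$ and $s_y$ and the fact that $\Pi_{r_g}$ acts separately on real and imaginary parts, then bound the one-dimensional failure probability by $r_b/r_s$. The paper simply asserts that one-dimensional bound without justification, so the covering argument you supply is a genuine addition of detail rather than a different route.

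However, one step of your covering argument is stated incorrectly, even though the bound you extract from it is right. You claim the window $c + [-r_s, r_s]$, having length $2r_s \leq r_g$, ``meets at most one component of $B$,'' and that the intersection is ``contained in a single interval of length $2r_b$.'' Neither is true in general: take $r_g = 2r_s$ and $r_b = 0.4\,r_s$, so the components of $B$ are $[0.6\,r_s, 1.4\,r_s]$, $[2.6\,r_s, 3.4\,r_s]$, etc.; the window $[1.3\,r_s, 3.3\,r_s]$ meets two of them, and the intersection $[1.3\,r_s,1.4\,r_s] \cup [2.6\,r_s,3.3\,r_s]$ is not contained in any interval of length $2r_b = 0.8\,r_s$. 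What saves the argument is that $\mathrm{Leb}\bigl((c+[-r_s,r_s]) \cap B\bigr) \leq 2r_b$ regardless: $B$ is $r_g$-periodic with exactly measure $2r_b$ per period (the components are disjoint since $2r_b \leq 2r_s \leq r_g$), and a window of length $2r_s \leq r_g$ projects injectively onto a single period, so its intersection with $B$ has measure at most $2r_b$. With that substitution your proof is correct; your closing remark about why the denominator is $2r_s$ rather than $r_g$ is also the right intuition, though for the reason just given (the window covers at most one period) rather than because it meets at most one boundary component.
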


In the following, we present a sparse Fourier recovery algorithm that incorporates the random shift idea.
The algorithm takes $O(k \log k \log n)$ measurements. We set $H = O(\log k)$ and take measurements of ${\cal T} = \{{\cal T}^{(1)}, \ldots, {\cal T}^{(H)}\}$ , where ${\cal T}^{(h)}$ is a list of $O(k \log n)$ i.i.d elements in $[p]^d$. 

In a nutshell, our approach finely discretizes the space of possible trajectories the algorithm could evolve. After we find estimates for $\wh{x}_f$ we shift them randomly and project them onto a coarse grid (which is the same as projecting onto one of randomly shifted grids). We shall show that then the number of trajectories is pruned, and we need to argue for a much smaller collection of events.
We note that we make the decoding algorithm randomized: the randomness in previous algorithms was present only when taking samples, and the rest of the algorithm was deterministic. However, here we need randomness in both cases, and that helps us prune the number of possible trajectories. To the best of our knowledge, this is a novel argument and approach, and might be helpful for future progress in the field.

\paragraph{Recovery algorithm.} We assume that we have already obtained a $O(k)$-sparse $y \in \C^{[p]^d}$ such that $\|\wh{x} - y\|_\infty \leq 2 \nu$ and $y$ is ``almost'' independent from ${\cal T}$.
We show how to obtain $y' \in \C^{[p]^d}$ such that $\|\wh{x} - y'\|_\infty \leq \nu$ with probability $1 - 1 / \poly(n)$ and $y'$ is ``almost'' independent from ${\cal T}$.
The idea is the following. We first run the \textsc{LinfinityReduce} procedure $H = O(\log k)$ times to get an $O(k)$-sparse $z \in \C^{[p]^d}$ such that $\|\wh{x} - y - z\|_\infty \leq \frac{1}{2^{20}k} \nu$.
Then we repeatedly sample a uniform random shift $s \in [- 10^{-3} \nu,  10^{-3} \nu] + \i  [-10^{-3} \nu,  10^{-3} \nu]$ until for every $f \in \supp(y+z)$, all the points (or complex numbers) of the form $y_f + z_f +s + a + b \i$ with $a,b \in [-\frac{\nu}{2^{20}k}, \frac{\nu}{2^{20}k}]$ round to the same grid point in $\mathcal{G}_{0.04 \nu}$.
Finally, for every $f \in \supp(y+z)$, we assign $\Pi_{0.04 \nu} (y_f + z_f + s)$ to $y'_f$; all remaining coordinates in $y'$ will be assigned $0$.
We present an informal version of our algorithm in Algorithm~\ref{alg:fourier_sparse_recovery_informal}, and defer its formal version to the appendix.

\begin{algorithm}[!t]\caption{Fourier sparse recovery by random shift and projection (informal version)}\label{alg:fourier_sparse_recovery_informal} 
\begin{algorithmic}[1]
\Procedure{\textsc{FourierSparseRecovery}}{$x,n,k,\mu,R^*$} \Comment{Theorem~\ref{thm:fourier_sparse_recovery_informal}, $n=p^d$} 
  \State {\bf Require} that $\mu = \frac{1}{\sqrt{k}} \|\wh{x}_{-k}\|_2$ and $R^* \geq \|\wh{x}\|_\infty\ / \mu$
  \State $H \leftarrow O(\log k)$, $\nu \leftarrow \mu R^* / 2$, ~$y \leftarrow \vec 0$ \Comment{$y \in \C^{[p]^d}$ refers to the sparse vector recovered so far}
  \State  Let ${\cal T} = \{ {\cal T}^{(1)}, \cdots, {\cal T}^{(H)} \}$ where each ${\cal T}^{(h)}$ is a list of i.i.d. uniform samples in $[p]^d$ 
  \While {{\bf true}} 
    \State $\nu' \leftarrow \frac{1}{2^{20}k} \nu$
    \State Use $\{x_t\}_{t\in \mathcal{T}}$ to run the \textsc{LinfinityReduce} procedure (in Lemma~\ref{lem:locate_and_estimate_informal}) $H$ times (use samples in ${\cal T}^h$ for each $h\in [H]$ ), and finally it finds $z$ so that $\|\wh{x} - y - z\|_\infty \leq \nu'$ 
    \State{{\bf if} {$\nu' \leq \mu$} } {\bf then} \Return $y+z$ \Comment{We found the solution}
    \Repeat 
      \State Pick $s \in \mathcal{B}_\infty(0, 10^{-3}\nu)$ uniformly at random \label{lin:s_informal}
    \Until{$\forall f \in \supp(y+z), |\Pi_{0.04 \nu}(\mathcal{B}_\infty(y_f + z_f + s, \nu'))|=1$} \label{lin:until_good_informal}
    \State $y' \leftarrow \vec 0$	\label{lin:start_project}
    \For{$f \in \supp(y+z)$}
      \State $y'_f \leftarrow \Pi_{0.04 \nu}(y_f + z_f + s)$ \Comment{We want $\|\wh{x} - y' \|_\infty \leq \nu$ and the depend-}
    \EndFor \Comment{ence between $y'$ and $\mathcal{T}$ is under control~~~} 
    \State $y \leftarrow y'$, $\nu \leftarrow \nu / 2$ \label{lin:end_project}
  \EndWhile
\EndProcedure
\end{algorithmic}
\end{algorithm}

\paragraph{Informal Analysis.} We analyze the above approach.

At every iteration, our algorithm holds a vector $y$, and computes a vector $z$. Instead of setting $y$ to $y+z$ and iterating, as would be the natural thing to do, we set $y$ to $y'$ (lines \ref{lin:start_project} to \ref{lin:end_project}) where
\begin{align*}
y'_f = 
\begin{cases} 
  \Pi_{0.6 \nu}(y_f + z_f), & \text{~if~} f \in \supp(y+z); \\ 
  0, & \text{~if~} f \not\in \supp(y+z).
\end{cases} 
\end{align*}

First, we have the guarantee that $\|\wh{x} - y'\|_\infty \leq \nu$.
Moreover, by our choice of $s$, for every $f \in \supp(y+z)$, $y_f + z_f + s$ and $\wh{x}_f + s$ round to the same grid point in $\mathcal{G}_{0.04 \nu}$. 
Therefore, for the new vector $y'$ we have recovered, we ``hide'' the randomness in ${\cal T}$, and the randomness only leaks from failed attempts of the shifts.
In the following, we show that each attempt of shift succeeds with probability $\frac{1}{2}$.

We can restate the procedure of choosing $s$ to be: 
\begin{align*}
\text{{\bf repeatedly}} & ~ \text{~sample~} s \sim \mathcal{B}_\infty(0, 10^{-3}\nu), \\
\text{{\bf until}} & ~ \text{~for~all~} f \in \supp(y+z), \Big|\Pi_{0.04\nu} \Big( \mathcal{B}_\infty \big( y_f + z_f +s, \frac{\nu}{2^{20}k} \big) \Big) \Big| = 1.
\end{align*}
Note that $|\supp(y+z)| = O(k)$. 
Let us say that we can always guarantee that $|\supp(y+z)| \leq 50 k$.
By Lemma~\ref{lem:random_shift_informal} where we let $r_b = \frac{\nu}{2^{20}k}$, $r_s = 10^{-3} \nu$ and $r_g = 0.04 \nu$, for $f \in \supp(y+z)$,
\begin{align*}
\Pr \Bigg[ \Big| \Pi_{0.04\nu} \left( \mathcal{B}_\infty \big(y_f + z_f +s, \frac{\nu}{2^{20}k} \big) \right) \Big| = 1 \Bigg] \geq \left(1 - \frac{r_b}{r_s}\right)^2 \geq 1 - \frac{1}{100k}.
\end{align*}
By a union bound over $f \in \supp(y + z)$, the probability is at least $\frac{1}{2}$ that for all $f \in \supp(y+z)$, $|\Pi_{0.04\nu}(\mathcal{B}_\infty(y_f + z_f +s, \frac{\nu}{2^{20}k}))| = 1$.

Therefore, with probability $1 - 1 / \poly(n)$, we will only try $O(\log n)$ shifts.
We can apply a union bound over $O(\log n)$ possible shifts, and prove that with probability $1 - 1 / \poly(n)$ if taking $y'$ as our new $y$, and shrinking $\nu$ by half, the \textsc{LinfinityReduce} procedure will work as desired as if there is no dependence issue.

\paragraph{Sufficient event.}
Let $S$ be top $O(k)$ coordinates in $\wh{x}$ which are also larger than $(1/\sqrt{k}) \|x_{-k}\|_2$ in magnitude.
Let $L = O(\log R^*)$ denote the number of iterations in Algorithm~\ref{alg:fourier_sparse_recovery_informal}.
For $\ell \in [L]$, let $\nu_\ell = 2^{-\ell} \mu R^*$.
For $\ell \in [L-1]$, let $s_\ell^{(a)}$ be the $a$-th sample from $\mathcal{B}_\infty(0, 10^{-3} \nu_\ell)$ as appeared on Line~\ref{lin:s_informal} in Algorithm~\ref{alg:fourier_sparse_recovery_informal}.
For the sake of analysis, we assume that Algorithm~\ref{alg:fourier_sparse_recovery_informal} actually produces an infinite sequence of shifts $s_\ell^{(1)}, s_\ell^{(2)}, \ldots$.
We formulate the event that guarantees successful execution of Algorithm~\ref{alg:fourier_sparse_recovery_informal}.
We define event $\mathcal{E}_2$ to be the union of all the following events. \\
1. For all $\ell \in [L-1]$, there exists $a \in [10 \log n]$ so that for all $f \in S$,
\begin{align*}
\left| \Pi_{0.04 \nu_\ell} \left(\mathcal{B}_\infty(\wh{x}_f + s_\ell^{(a)}, \frac{1}{100k} \nu_\ell) \right) \right| = 1.
\end{align*} 
\\
2. For $\ell = 1$, if we run the \textsc{LinfinityReduce} procedure $H$ times with $y = \vec 0$ and measurements in ${\cal T}$, we get $z$ such that $\|\wh{x} - z\|_\infty \leq 2^{1-H} \nu_1$ and $\supp(z) \subseteq S$. \\
3. For all $\ell \in \{2, \ldots, L\}$, for all $a \in [10 \log n]$, if we run the \textsc{LinfinityReduce} procedure $H$ times with $y = \xi$ where
\begin{align*}
\xi_f = 
\begin{cases} 
  \Pi_{0.04 \nu_\ell}(\wh{x}_f + s_{\ell-1}^{(a)}), & \text{~if~} f \in S; \\ 
  0, & \text{~if~} f \in \ov{S}.
\end{cases} 
\end{align*}
then we get $z$ such that $\|\wh{x} - y - z\|_\infty \leq 2^{1-H} \nu_\ell$  and $\supp(y+z) \subseteq S$.

We can prove that event ${\cal E}_2$ happens with probability $1 - 1 / \poly(n)$.
Moreover, we can prove that conditioned on event ${\cal E}_2$ Algorithm~\ref{alg:fourier_sparse_recovery_informal} gives correct output.
We defer both proofs in the appendix.







\newpage

\appendix

\newpage
\addcontentsline{toc}{section}{References}
\bibliographystyle{alpha}
\bibliography{ref}
\newpage

\newpage
{\hypersetup{linkcolor=black}
\tableofcontents
}
\newpage

\section{Algorithm for $d$-dimensional Sparse Fourier Transform}
In this section, we will give a Fourier sparse recovery algorithm that takes $O(k \log k \log n)$ measurements with ``$\ell_\infty / \ell_2$'' guarantee.
We assume the knowledge of $\mu = \frac{1}{\sqrt{k}} \|\wh{x}_{-k}\|_2$. 
In fact, a constant factor approximation suffices, but we prefer to assume exact knowledge of it in order to simplify exposition. All of the arguments go through in the other case, with minor changes in constants.
We also assume we know $R^*$ so that $R^* \geq \|\wh{x}\|_\infty\ / \mu$. 
We assume that $\log R^* = O(\log n)$.
For larger $\log R^* = O(\poly(n))$, our algorithm will still work, but the decoding time will be worse by a factor of $\frac{\log R^*}{\log n}$.
Note that our assumptions on $\mu$ and $R^*$ are standard.
For example, \cite{ik14} make the same assumption.
We assume that we can measure the signal $x$ in the time domain, and we want to recover the signal $\wh{x}$ in the frequency domain.

In our algorithm, we will use $\mu$ as a threshold for noise, and we will perform $\log R^*$ iterations, where in each iteration the upper bound of $\ell_\infty$ norm of the residual signal (in the frequency domain) shrinks by half.
In Section~\ref{sub:notations}, we give some definitions that will be used in the algorithm.
Then we present our new algorithm for $d$-dimension Fourier sparse recovery in Section~\ref{sub:algorithm}.
In Section~\ref{sub:analysis}, we prove the correctness of the proposed algorithm.

\subsection{Notations} \label{sub:notations}

For a subset of samples (or measurements) $\{x_t\}_{t \in T}$ from the time domain, where $T$ is a list of elements in $[p]^d$, we define $\wh{x}^{[T]}$ in Definition~\ref{def:measurement_notation} as our estimation to $\wh{x}$.

\begin{definition}[Fourier transform of a subset of samples]\label{def:measurement_notation}
Let $x \in \C^{[p]^d}$. For any $T$ which is a list of elements in $[p]^d$, for any $f \in [p]^d$, we define
\begin{align*}
\wh{x}_f^{[T]} = \frac{\sqrt{n}}{|T|} \sum_{t \in T} \omega^{f^\top t} x_t.
\end{align*}
\end{definition}

In order to reuse samples across different iterations where we drive down the upper bound of the residual signal by half, in each iteration after we obtain estimations to heavy hitters (or equivalently large coordinates), instead of subtracting the estimates directly, we need to ``hide'' the randomness leaked by the samples. 
We interpret each estimate (which is a complex number) as a point on a $2$-dimension plane, and hide the randomness by rounding the estimate to the nearest grid point (where the side length of the grid is chosen to be a small constant fraction of the target $\ell_\infty$ norm of the residual signal in the frequency domain), which we call ``projection onto grid''.
In Definition~\ref{def:box_and_grid}, we formally define box and grid, and in Definition~\ref{def:project_to_grid} we define projection to grid.
We illustrate these two definitions in Figure~\ref{fig:box_and_grid}.

\begin{figure}[!t]
  \centering
    \includegraphics[width=0.8\textwidth]{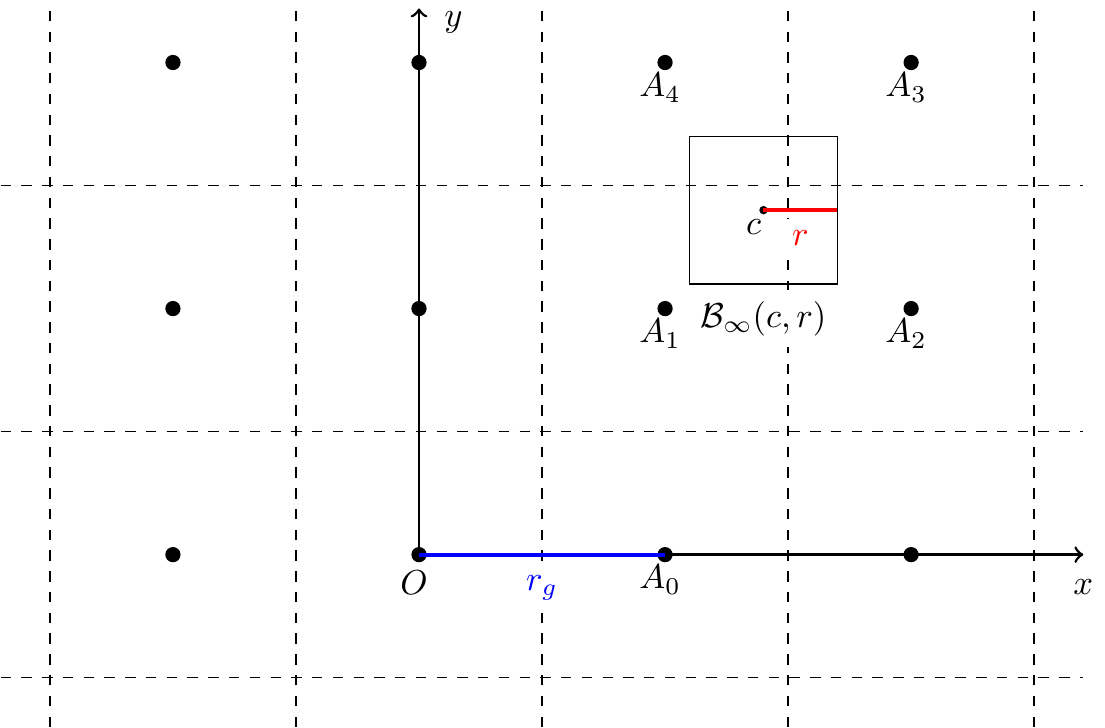}
    \caption{Illustration of box $\mathcal{B}_\infty(c, r)$ and grid $\mathcal{G}_{r_g}$. Box $\mathcal{B}_\infty(c, r)$ refers to all the points in the square centered at $c$ with side length $2r$. Grid $\mathcal{G}_{r_g}$ refers to all the solid round points, and the distance between origin $O$ and $A_0$ is $r_g$. Note that the dashed lines are decision boundaries of the projection $\Pi_{r_g}$, and all the points inside a minimum cell separated by the dashed lines are mapped (by $\Pi_{r_g}$) to the same grid point in $\mathcal{G}_{r_g}$ (which is the center of the cell). In this example we have $\Pi_{r_g}(c) = A_1$ and $\Pi_{r_g}(\mathcal{B}_\infty(c,r)) = \{A_1, A_2, A_3, A_4\}$.}\label{fig:box_and_grid}
\end{figure}

\begin{definition}[box and grid]\label{def:box_and_grid}
For any $c \in \C$ and $r \geq 0$, we define box $\mathcal{B}_\infty(c, r) \subseteq \C$ as
\begin{align*}
\mathcal{B}_\infty(c, r) = \{ c+ x + y \i: x,y \in [-r, r] \}.
\end{align*}

Namely, if we consider complex numbers as points on $2$D plane, box $\mathcal{B}_\infty(c, r)$ refers to $\ell_\infty$ ball with radius $r$ centered at $c$.

For any $r > 0$, we define grid $\mathcal{G}_r \subseteq \C$ as
\begin{align*}
\mathcal{G}_r = \{ x r + y r \i : x, y \in \Z \}.
\end{align*}
\end{definition}

\begin{definition}[projection onto grid]\label{def:project_to_grid}
For any $r > 0$, we define $\Pi_r$ to be a maping from $\C$ to $\mathcal{G}_r$, so that for any $c \in \C$,
\begin{align*}
\Pi_r(c) = \arg\min_{c' \in \mathcal{G}_r} |c - c'|,
\end{align*}
where we break the tie by choosing the one with minimum $|c'|$.
As a natural generalization, For $C \subseteq \C$, we define 
\begin{align*}
\Pi_r(C) = \{\Pi_r(c): c \in C \}.
\end{align*}
\end{definition}

\subsection{Algorithm} \label{sub:algorithm}

We present our new sparse Fourier recovery algorithm in Algorithm~\ref{alg:fourier_sparse_recovery}. 
Its auxiliary function \textsc{LinfinityReduce} is in Algorithm~\ref{alg:linfinity_reduce}.
Important constants are summarized in Table~\ref{tab:parameters}.

In Algorithm~\ref{alg:fourier_sparse_recovery}, we define ``bucket size'' $B = O(k)$ and number of repetitions $R = O(\log n)$.
For each $r \in [R]$, we choose ${\cal T}_r$ to be a list of $B$ independent and uniformly random elements in $[p]^d$.
We will measure $x_t$ for all $t \in \cup_{r \in [R]} {\cal T}_r$, and use \textsc{LinfinityReduce} in Algorithm~\ref{alg:linfinity_reduce} to locate and estimate all the ``heavy hitters'' of the residual signal so that if we substract them then the $\ell_\infty$ norm of the new residual signal shrinks by half.
The input to \textsc{LinfinityReduce} is a signal $x \in \C^{[p]^d}$ in the time domain (but we can only get access to $x_t$ where $t \in \cup_{r \in [R]} {\cal T}_r$), a sparse vector $y \in \C^{[p]^d}$ in the frequency domain that we have recovered so far, and $\nu \geq \mu$ such that $\|\wh{x} - y\|_\infty \leq 2 \nu$ where we will refer $\wh{x} - y$ as the currect residual signal (in the frequency domain).
It is guaranteed that $\textsc{LinfinityReduce}(x, n, y, \{ {\cal T}_r \}_{r=1}^R, \nu)$ returns a $O(k)$-sparse $z$ so that $\|\wh{x} - y - z \| \leq \nu$ with probability $1 - 1 / \poly(n)$.

Algorithm~\ref{alg:fourier_sparse_recovery} will run \textsc{LinfinityReduce} $H = O(\log k)$ times, where in the $h$-th copy it measures $\mathcal{T}^{(h)} = \{{\cal T}_r^{(h)}\}_{r \in [R]}$ for $h \in [H]$.
We denote $\mathcal{T} = \{{\cal T}^{(h)}\}_{r \in [R]}$.
If $\log R^* \leq H$, then we can simply use different $\mathcal{T}^{(h)}$ in different iterations.
In that case $L=1$ and $H = \log R^*$ in Algorithm~\ref{alg:fourier_sparse_recovery}.
We will get $z^{(1)}$ on Line~\ref{lin:z_assign} such that $\|\wh{x} - y^{(0)} - z^{(1)}\|_\infty \leq \mu$ (we will prove in the analysis this holds with probability $1 - 1 / \poly(n)$) where $y^{(0)} = 0$, and return $z^{(1)} + y^{(0)}$ on Line~\ref{lin:return_recovered_signal}.

If $\log R^* > H$, we have to reuse the samples.
We proceed in $L$ iterations (in the loop between Line~\ref{lin:outer_loop_start} and Line~\ref{lin:outer_loop_end} in Algorithm~\ref{alg:fourier_sparse_recovery}), where $L = \log R^* - H + 1$.
For $\ell \in [L]$, as defined in Line~\ref{lin:nu_ell}$, \nu_{\ell} = 2^{-\ell} \mu R^*$ refers to the target $\ell_\infty$ of the residual signal in the $\ell$-th iteration (namely, for $\ell \in [L-1]$ we want to obtain $y^{(\ell)}$ so that $\|\wh{x} - y^{(\ell)}\|_\infty \leq \nu_\ell$).
In the $\ell$-th iteration where $\ell \in [L]$, by using the samples in $\mathcal{T} = \{{\cal T}^{(h)}\}_{h \in {H}}$ (Line~\ref{lin:z_init} to Line~\ref{lin:z_assign}), the algorithm tries to get $z^{(\ell)}$ so that $\|\wh{x} - y^{(\ell-1)} - z^{(\ell)}\|_\infty \leq 2^{1-H} \nu_\ell$.
The intuition on the behavior of Line~\ref{lin:z_init} to Line~\ref{lin:z_assign} is depicted in Figure~\ref{fig:z_ell}.

\begin{figure}[!t]
  \centering
    \includegraphics[width=\textwidth]{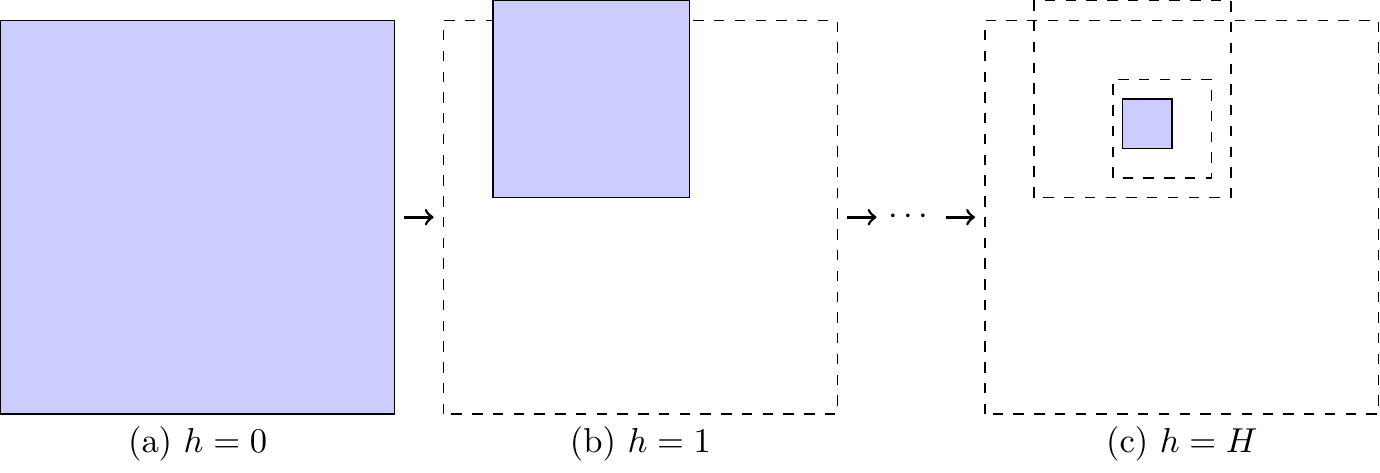}
    \caption{Illustration of the behavior of Line~\ref{lin:z_init} to Line~\ref{lin:z_assign} in Algorithm~\ref{alg:fourier_sparse_recovery}. 
    For any $f \in [p]^d$, we draw box $\mathcal{B}_\infty(y^{(\ell-1)}_f+z_f, 2^{1-h} \nu_\ell)$ after $h$ iterations of the for loop between Line~\ref{lin:inner_loop_start} and Line~\ref{lin:inner_loop_end} in Algorithm~\ref{alg:fourier_sparse_recovery}, where $h \in \{0, 1, \ldots, H\}$.
    Conditioned on \textsc{LinfinityReduce} being correct, for every $h \in \{0, 1, \ldots, H\}$, after $h$-th iteration we have $\wh{x}_f \in \mathcal{B}_\infty(y^{(\ell-1)}_f+z_f, 2^{1-h} \nu_\ell)$.
    When $h = 0$, i.e. before the loop between Line~\ref{lin:inner_loop_start} and Line~\ref{lin:inner_loop_end} starts, we know that $\wh{x}_f \in \mathcal{B}_\infty(y^{(\ell-1)}_f, 2 \nu_\ell)$ as depicted by (a).
    After each iteration in $h$, the radius of the box shrinks by half (and its center might change).
    Finally after $H$ iterations, as depicted by (c), we obtain $z^{(\ell-1)}$ such that $\wh{x}_f \in \mathcal{B}_\infty(y^{(\ell-1)}_f + z^{(\ell)}_f, 2^{1-H} \nu_\ell)$.
    }\label{fig:z_ell}
\end{figure}

If $\ell = L$ the algorithm will return $y^{(L-1)} + z^{(L)}$ as in Line~\ref{lin:return_recovered_signal}; otherwise, the algorithm will try to compute $y^{(\ell)}$ based on $y^{(\ell-1)} + z^{(\ell)}$.
In Line~\ref{lin:repeat_to_find_the_shift} to Line~\ref{lin:until_good}, the algorithm repeatedly samples a uniform random shift $s_\ell \in \mathcal{B}_\infty(0, \alpha \nu_\ell)$ (where $\alpha \in (0,1)$ is a small constant chosen in Table~\ref{tab:parameters}) until the shift is good, where shift $s_\ell$ is good if and only if for each $f \in \supp(y^{(\ell-1)}+z^{(\ell)})$, all the points in $\mathcal{B}_\infty(y^{(\ell-1)}+z^{(\ell)}+s_\ell, 2^{1-H}\nu_\ell)$ (i.e. the box obtained by applying shift $s_\ell$ to the box $\mathcal{B}_\infty(y^{(\ell-1)}+z^{(\ell)}, 2^{1-H}\nu_\ell)$) project to the same grid point in $\mathcal{G}_{\beta \nu_\ell}$.
We depict the process of obtaining the shift $s_\ell$ in Figure~\ref{fig:shift}. 
It is crucial to note that if the shift $s_\ell$ is good and the vector $z^{(\ell)}$ we get is desirable (namely $\|\wh{x} - y^{(\ell-1)} - z^{(\ell)}\|_\infty \leq 2^{1-H} \nu_\ell$), then for each $f \in \supp(y^{(\ell-1)} + z^{(\ell)})$, $\Pi_{\beta \nu_\ell}(y_f^{(\ell-1)} + z_f^{(\ell)} + s_\ell) = \Pi_{\beta \nu_\ell}(\wh{x}_f + s_\ell)$.

\begin{figure}[!t]
  \centering
    \includegraphics[width=\textwidth]{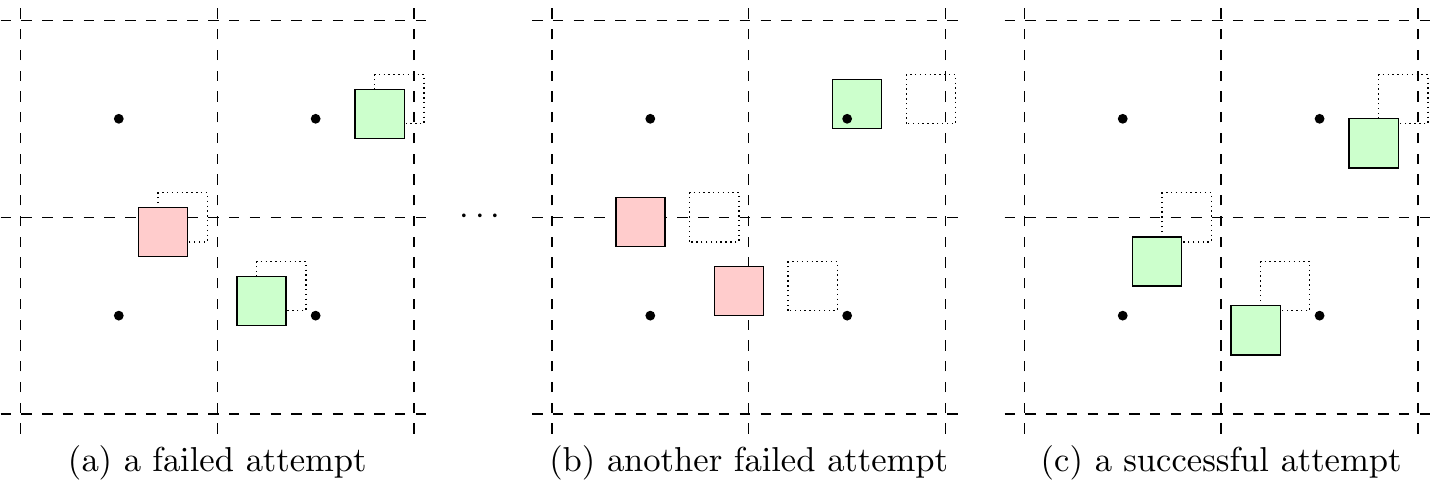}
    \caption{Illustration of the iteration between Line~\ref{lin:repeat_to_find_the_shift} and Line~\ref{lin:until_good} in Algorithm~\ref{alg:fourier_sparse_recovery}.
    The round solid points represent grid points in $\mathcal{G}_{\beta \nu}$, and the dashed lines represent decision boundaries of $\Pi_{\beta \nu_\ell}$.
    In this example we have $|\supp(y^{(\ell-1)} + z^{(\ell)})| = 3$, and the dotted squares represent boxes $\mathcal{B}_\infty(y^{(\ell-1)}_f + z^{(\ell)}_f, 2^{1-H} \nu_\ell)$ for $f \in \supp(y^{(\ell-1)} + z^{(\ell)})$.
    The algorithm repeatedly samples a random shift $s \sim \mathcal{B}_\infty(0, \alpha \nu_\ell)$, until all the shifted boxes $\{\mathcal{B}_\infty(y^{(\ell-1)}_f + z^{(\ell)}_f, 2^{1-H} \nu_\ell) + s\}_{f \in \supp(y^{(\ell-1)} + z^{(\ell)})}$ do not intersect with the dashed lines (i.e. decision boundaries of $\Pi_{\beta \nu_\ell}$).
    In the figure, we color a shifted box in green if it does not intersect with dashed lines, and color in red otherwise.
    After a series of failed attempts from (a) to (b), we finally have a successful attempt in (c).
    }\label{fig:shift}
\end{figure}

On Line~\ref{lin:y_ell_assign}, we assign $\Pi_{\beta \nu_\ell}(y_f^{(\ell-1)} + z_f^{(\ell)} + s_\ell)$ to $y_f^{(\ell)}$.
Because $\beta$ is a small constant, we still have the guarantee that $\|\wh{x} - y^{(\ell)}\|_\infty \leq \nu_\ell$.

\begin{algorithm}[!t]\caption{Fourier sparse recovery by random shift and projection}\label{alg:fourier_sparse_recovery}
\begin{algorithmic}[1]
\Procedure{\textsc{FourierSparseRecovery}}{$x,n,k,\mu,R^*$} \Comment{Theorem~\ref{thm:fourier_sparse_recovery_formal}, $n=p^d$} 
  \State {\bf Require} that $\mu = \frac{1}{\sqrt{k}} \|\wh{x}_{-k}\|_2$ and $R^* \geq \|\wh{x}\|_\infty\ / \mu$ \Comment{$R^*$ is a power of $2$}
	\State $B \leftarrow C_B \cdot k$ \Comment{$C_B$ is a constant defined in Table~\ref{tab:parameters}}
	\State $R \leftarrow C_R \cdot \log n$ \Comment{$C_R$ is a constant defined in Table~\ref{tab:parameters}}
  \State $H \leftarrow \min \{\log k + C_{H}, \log R^* \}$ \Comment{$C_H$ is a constant defined in Table~\ref{tab:parameters}}
  \For{$h = 1 \to H$}
    \For{$r = 1 \to R$}
      \State ${\cal T}_r^{(h)} \leftarrow $a list of $B$ i.i.d elements in $[p]^d$ 
    \EndFor
    \State $\mathcal{T}^{(h)} \leftarrow \{ {\cal T}_r^{(h)} \}_{r = 1} ^ {R}$
  \EndFor \Comment{We will measure $x_t$ for $t \in \cup_{h \in [H], r \in [R]} {\cal T}_r^{(h)}$}
  \State $y^{(0)} \leftarrow \vec 0$ \Comment{$y^{(0)} \in \C^n$}
  \State $L \leftarrow \log R^* - H + 1$
  \For{$\ell=1 \to L$} \label{lin:outer_loop_start}
    \State $\nu_\ell \leftarrow 2^{-\ell}\mu R^*$ \Comment{Target $\ell_\infty$ of the residual signal in iteration $t$} \label{lin:nu_ell}
    \State $z \leftarrow \vec 0$ \Comment{$z$ is a temporary variable used to compute $z^{(\ell)}$} \label{lin:z_init}
    \For{$h=1 \to H$} \label{lin:inner_loop_start}
      \State $ z \leftarrow z + \textsc{LinfinityReduce}(x, n, y^{(\ell-1)}+z, \mathcal{T}^{(h)}, 2^{1-h}\nu_\ell)$
    \EndFor \label{lin:inner_loop_end}
    \State $z^{(\ell)} \leftarrow z$ \Comment{We want $\| \wh{x} - y^{(\ell-1)} - z^{(\ell)}\|_\infty \leq 2^{1-H}\nu_\ell$} \label{lin:z_assign}
    \If {$\ell = L$}
      \State \Return $y^{(L-1)} + z^{(L)}$ \label{lin:return_recovered_signal}
    \EndIf
    \State $a \leftarrow 0$ \Comment{A temporary counter maintained for analysis purpose only}
    \Repeat \label{lin:repeat_to_find_the_shift}
      \State Pick $s_\ell \in \mathcal{B}_\infty(0, \alpha \nu_\ell)$ uniformly at random \Comment{$\alpha \in (0,1)$ is a small constant} \label{lin:alpha} \label{lin:shift}
      \State $a \leftarrow a + 1$ \Comment{$\beta$ in the next line is a small constant where $\alpha < \beta < 0.1$} 
    \Until{$\forall f \in \supp(y^{(\ell-1)}+z^{(\ell)}), |\Pi_{\beta \nu_\ell}(\mathcal{B}_\infty(y^{(\ell-1)}_f + z^{(\ell)}_f + s_\ell, 2^{1-H} \nu_\ell))|=1$} \label{lin:until_good}
    \State $a_\ell \leftarrow a$ \label{lin:a_ell}
    \For{$f \in \supp(y^{(\ell-1)}+z^{(\ell)})$}
      \State $y_f^{(\ell)} \leftarrow \Pi_{\beta \nu_\ell}(y^{(\ell-1)}_f + z^{(\ell)}_f + s_\ell)$ \Comment{We want $\|\wh{x} - y^{(\ell)} \|_\infty \leq \nu_\ell$} \label{lin:y_ell_assign}
    \EndFor
  \EndFor \label{lin:outer_loop_end}
\EndProcedure
\end{algorithmic}
\end{algorithm}

\subsection{Analysis} \label{sub:analysis}

In order to analyze the algorithm, let $S \subseteq [n]$ be top $C_S k$ coordinates of $\wh{x}$ where $C_S = 26$, and let $\ov{S} = [n] \setminus S$.
In order to analyze the performance of \textsc{LinfinityReduce} in Algorithm~\ref{alg:linfinity_reduce}, we need the following definition. 

\begin{definition}[uniform sample]
We say $t$ is sampled from $[p]^d$ uniformly at random if for each $i \in [d]$, we independently sample $t_i$ from $[p]$ uniformly at random. We use $t \sim [p]^d$ to denote it.
\end{definition}

\begin{fact} \label{fac:expectation_0}
Let $\omega = e^{2\pi \i /p}$ where $p$ is any positive integer. For a fixed $f \in [p]^d \setminus \{\vec 0\}$, $\E_{ t \sim [p]^d } [ \omega^{f^\top t} ] = 0$.
\end{fact}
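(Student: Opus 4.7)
The plan is to reduce the multidimensional statement to a one-dimensional identity via independence of coordinates, and then apply the standard geometric sum formula for nontrivial characters of $\Z/p\Z$.

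First, because $t \sim [p]^d$ samples its coordinates $t_1,\ldots,t_d$ independently and uniformly from $[p]$, I would write
\[
\E_{t \sim [p]^d}\bigl[\omega^{f^\top t}\bigr] = \E_{t_1,\ldots,t_d}\left[\prod_{i=1}^d \omega^{f_i t_i}\right] = \prod_{i=1}^d \E_{t_i \sim [p]}\bigl[\omega^{f_i t_i}\bigr].
\]
So it suffices to show that at least one factor on the right vanishes.

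Next, since $f \neq \vec 0$ (interpreting coordinates in $\Z/p\Z$), there exists some $i^* \in [d]$ with $f_{i^*} \not\equiv 0 \pmod{p}$. For that coordinate I would invoke the geometric sum identity: because $\omega^{f_{i^*}} \neq 1$, summing over a complete residue system modulo $p$ gives
\[
\sum_{t_{i^*}=1}^{p} \omega^{f_{i^*} t_{i^*}} = \omega^{f_{i^*}} \cdot \frac{\omega^{f_{i^*} p} - 1}{\omega^{f_{i^*}} - 1} = 0,
\]
using $\omega^p = 1$. Dividing by $p$ gives $\E_{t_{i^*} \sim [p]}[\omega^{f_{i^*} t_{i^*}}] = 0$, and multiplying with the remaining (bounded) factors yields the claim.

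I do not expect any genuine obstacle here; the only thing to watch is the notational convention that $[p] = \{1,\ldots,p\}$ is being used as a complete set of residues modulo $p$ (so $\vec 0$ refers to the zero vector in $(\Z/p\Z)^d$, equivalently the all-$p$ vector). Once that is acknowledged, the argument is essentially one line of algebra per coordinate, and independence does the rest.
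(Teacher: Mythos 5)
Your proposal is correct and follows essentially the same route as the paper: factor the expectation over coordinates by independence, pick a coordinate with $f_{i^*} \not\equiv 0 \pmod p$, and kill that factor with the geometric sum identity using $\omega^p = 1$. The only (immaterial) difference is that the paper sums $j$ from $0$ to $p-1$ while you sum over $\{1,\dots,p\}$; both are complete residue systems, and your remark about interpreting $[p]$ modulo $p$ is the right caveat.
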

\begin{proof}
Note that $\E_{t \sim [p]^d}[\omega^{f^\top t}] = \prod_{i \in [d]} \E_{t_i \sim [p]}[\omega^{f_i t_i}]$ by the fact that $t_1, \ldots, t_d$ are independent.
Because $f \ne \vec 0$, there exists $i \in [d]$ so that $f_{i} \ne 0$.
We have 
\begin{align*}
\E_{t_{i} \sim [p]} [\omega^{f_{i} t_{i}}] = & ~ \frac{1}{p} \sum_{j=0}^{p-1} (\omega^{f_i})^j \\
= & ~ \frac{1}{p} \cdot \frac{(\omega^{f_i})^0 (1 - (\omega^{f_i})^p)}{1 - \omega^{f_i}} \\
= & ~ 0,
\end{align*}
where the second step follows from the sum of geometry series where $\omega^{f_i} \ne 1$, adn the third step follow from $(\omega^{f_i})^p = e^{2 \pi \i f_i} = 1$.
Therefore, $\E_{ t \sim [p]^d } [ \omega^{f^\top t} ] = 0$.

\end{proof}

We define measurement coefficient as follows:
\begin{definition}[measurement coefficient]\label{def:measurement_coefficient}
For any $f \in [p]^d$ and any $T$ which is a list of elements in $[p]^d$, we define 
\begin{align*}  
\end{align*}
\end{definition}

By definition of $c_f^{[T]}$ and $d$-dimensional Fourier transform, we can decompose $\wh{x}_f^{[T]}$ as follows.

\begin{lemma}[measurement decomposition]\label{lem:measurement_decomposition}
For any $f \in [p]^d$ and any $T$ which is a list of elements in $[p]^d$,
\begin{align*}
\wh{x}_f^{[T]} = \sum_{f' \in [p]^d} c_{f-f'}^{[T]} \wh{x}_{f'}.
\end{align*}
\end{lemma}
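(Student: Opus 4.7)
The plan is to prove this by direct substitution of the inverse Fourier transform into the definition of $\wh{x}_f^{[T]}$, followed by an interchange of the two finite sums. Starting from
\begin{align*}
\wh{x}_f^{[T]} = \frac{\sqrt{n}}{|T|} \sum_{t \in T} \omega^{f^\top t} x_t,
\end{align*}
I would substitute $x_t = \frac{1}{\sqrt{n}} \sum_{f' \in [p]^d} \wh{x}_{f'} \omega^{-f'^\top t}$. The $\sqrt{n}$ factors cancel, and since both sums are over finite index sets, Fubini justifies the swap, yielding
\begin{align*}
\wh{x}_f^{[T]} = \sum_{f' \in [p]^d} \wh{x}_{f'} \cdot \frac{1}{|T|} \sum_{t \in T} \omega^{(f-f')^\top t}.
\end{align*}

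The inner average is exactly the measurement coefficient (from Definition~\ref{def:measurement_coefficient}, which should read $c_g^{[T]} = \frac{1}{|T|}\sum_{t\in T} \omega^{g^\top t}$) evaluated at the frequency offset $g = f - f'$, so the right-hand side equals $\sum_{f'\in [p]^d} c_{f-f'}^{[T]} \wh{x}_{f'}$, which is the desired identity. There is no real obstacle here: the claim is a purely algebraic rearrangement, and the only thing to be careful about is to keep the exponent signs consistent when combining $\omega^{f^\top t}$ from the measurement with $\omega^{-f'^\top t}$ from the inverse transform, so that the surviving exponent is $(f-f')^\top t$ rather than $(f'-f)^\top t$. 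Note that this lemma is the foundation for analyzing the estimators $\wh{x}_f^{[T]}$: Fact~\ref{fac:expectation_0} immediately gives $\E[c_g^{[T]}] = \mathbbm{1}[g = \vec 0]$, so in expectation $\wh{x}_f^{[T]}$ equals $\wh{x}_f$, which is what justifies using it as an estimator in the subsequent \textsc{LinfinityReduce} analysis.
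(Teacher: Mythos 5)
Your proof is correct and follows exactly the same route as the paper's: substitute the inverse Fourier transform into the definition of $\wh{x}_f^{[T]}$, cancel the $\sqrt{n}$ normalizations, swap the two finite sums, and recognize the inner average $\frac{1}{|T|}\sum_{t\in T}\omega^{(f-f')^\top t}$ as $c_{f-f'}^{[T]}$. You also correctly inferred the intended content of Definition~\ref{def:measurement_coefficient}, whose display is left empty in the source.
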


\begin{proof}
We have
\allowdisplaybreaks
\begin{align*}
\wh{x}_f^{[T]} = & ~ \frac{\sqrt{n}}{|T|} \sum_{t \in T} \omega^{f^\top t} x_t \\
= & ~ \frac{\sqrt{n}}{|T|} \sum_{t \in T} \omega^{f^\top t} \frac{1}{\sqrt{n}} \sum_{f' \in [p]^d} \omega^{-f'^\top t}  \wh{x}_{f'} \\
= & ~ \frac{\sqrt{n}}{|T|} \sum_{t \in T} \frac{1}{\sqrt{n}} \sum_{f' \in [p]^d} \omega^{(f-f')^\top t} \wh{x}_{f'} \\
= & ~ \sum_{f' \in [p]^d} \left( \frac{1}{|T|} \sum_{t \in T} \omega^{(f-f')^\top t} \right) \wh{x}_{f'}  \\
= & ~ \sum_{f' \in [p]^d} c_{f-f'}^{[T]} \wh{x}_{f'},
\end{align*}
where the first step follow by the definition of $\wh{x}_f^{[T]}$ in Definition~\ref{def:measurement_notation}, second step follows by the definition of inverse $d$-dimensional Fourier transform (see Section~\ref{sec:preliminary}), third and forth step follow by rearranging terms, last step follows by the definition of measurement coefficients $c$ in Definition~\ref{def:measurement_coefficient}.
\end{proof}

\begin{table}[!t]
\centering
\begin{tabular}{ | l | l | l | l | }
    \hline
    {\bf Notation} & {\bf Choice} & {\bf Statement} & {\bf Parameter} \\ \hline
    $C_B$ & $10^6$ & Lemma~\ref{lem:correctness_of_fourier_sparse_recovery} & $B$\\ \hline
    $C_R$ & $10^3$ & Lemma~\ref{lem:E_happen_whp} & $R$ \\ \hline
    $C_H$ & $20$ & Algorithm~\ref{alg:fourier_sparse_recovery} & $H$ \\ \hline
    $\alpha$ & $10^{-3}$ & Algorithm~\ref{alg:fourier_sparse_recovery} Line~\ref{lin:alpha} & shift range \\ \hline
    $\beta$ & $0.04$ & Algorithm~\ref{alg:fourier_sparse_recovery} Line~\ref{lin:until_good} & grid size \\ \hline
    $C_S$ & $26$ & Lemma~\ref{lem:E_happen_whp}, Lemma~\ref{lem:correctness_of_fourier_sparse_recovery} & $|S|$ \\ \hline
  \end{tabular}\caption{Summary of important constants.} \label{tab:parameters}
\end{table}

\begin{table}[!t]
\centering
\begin{tabular}{ | l | l | }
    \hline
    {\bf Lemma} & {\bf Meaning} \\ \hline 
    Lemma~\ref{lem:measurement_decomposition} & measurement decomposition \\ \hline
    Lemma~\ref{lem:coefficient_bound} & properties of coefficient \\ \hline 
    Lemma~\ref{lem:noise_bound} & noise bound \\ \hline
    Lemma~\ref{lem:linfinity_reduce} & guarantee of \textsc{LinfinityReduce} \\ \hline
    Lemma~\ref{lem:random_shift} & property of a randomly shifted box \\ \hline
    Lemma~\ref{lem:E_happen_whp} & event $\mathcal{E}$ happens \\ \hline
    Lemma~\ref{lem:correctness_of_fourier_sparse_recovery} & correctness of our algorithm \\ \hline
 \end{tabular}\caption{Summary of Lemmas.} \label{tab:lemmas}
\end{table}

Let $T$ be a list of i.i.d. samples from $[p]^d$, then the coeffcients $c_f^{[T]}$ defined in Definition~\ref{def:measurement_coefficient} have the following property.

\begin{lemma}[properties of coeffcient $c$]\label{lem:coefficient_bound}
Let $T$ be a list of $B$ independent and uniform random elements in $[p]^d$.
Then we have \\
1. $c_0^{[T]} = 1$. \\
2. For any $f \in [p]^d \setminus \{0\}$, $\E_{T} \left[ |c_f^{[T]}|^2 \right]  = \frac{1}{B}$. \\
3. For any $f,f' \in [p]^d$, $f \ne f'$, $\E_{T} \left[ c_f^{[T]} \cdot \overline{c_{f'}^{[T]}} \right] = 0$.
\end{lemma}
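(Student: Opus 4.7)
The plan is to treat all three parts as direct second-moment computations expanding the definition $c_f^{[T]} = \frac{1}{B}\sum_{t \in T} \omega^{f^\top t}$, and then reduce everything to Fact~\ref{fac:expectation_0} combined with the independence of the samples $t_1,\ldots,t_B$ drawn uniformly from $[p]^d$.

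Part (1) is immediate: when $f = \vec 0$, every summand satisfies $\omega^{f^\top t} = 1$, so $c_0^{[T]} = \frac{1}{B}\sum_{t\in T} 1 = 1$ with no randomness involved.

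For Part (2), I would write
\begin{align*}
|c_f^{[T]}|^2 \;=\; c_f^{[T]} \cdot \overline{c_f^{[T]}} \;=\; \frac{1}{B^2} \sum_{i=1}^{B} \sum_{j=1}^{B} \omega^{f^\top(t_i - t_j)},
\end{align*}
and then split the double sum according to whether $i=j$ or $i\neq j$. The $B$ diagonal terms are each identically $1$ and contribute $B/B^2 = 1/B$. For each off-diagonal pair $i\neq j$, independence gives $\E[\omega^{f^\top(t_i-t_j)}] = \E[\omega^{f^\top t_i}]\,\E[\omega^{-f^\top t_j}]$, and since $f\neq \vec 0$ (so also $-f\neq \vec 0$), both factors vanish by Fact~\ref{fac:expectation_0}. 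Summing gives the claimed $1/B$.

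For Part (3), I would similarly expand
\begin{align*}
c_f^{[T]} \cdot \overline{c_{f'}^{[T]}} \;=\; \frac{1}{B^2} \sum_{i=1}^{B} \sum_{j=1}^{B} \omega^{f^\top t_i - f'^\top t_j}
\end{align*}
and argue that every term in the double sum has expectation zero. For $i=j$, the exponent is $(f-f')^\top t_i$ with $f-f'\neq \vec 0$, so Fact~\ref{fac:expectation_0} kills the term. For $i\neq j$, independence factors the expectation into $\E[\omega^{f^\top t_i}]\cdot\E[\omega^{-f'^\top t_j}]$; since $f\neq f'$ at least one of $f,f'$ is nonzero, so at least one factor is zero by Fact~\ref{fac:expectation_0}. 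Thus the whole sum has expectation zero.

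There is no real obstacle here: the only item that requires a brief moment of care is the off-diagonal case in Part (3), where one must note that the possibility $f=\vec 0$ or $f'=\vec 0$ (but not both, since $f\neq f'$) is handled uniformly by the independence factorization — whichever of the two factors has a nonzero frequency still vanishes in expectation. The rest is bookkeeping.
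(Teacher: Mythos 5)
Your proposal is correct and follows essentially the same route as the paper: expand $c_f^{[T]}\cdot\overline{c_{f'}^{[T]}}$ into a double sum over pairs of samples, handle the diagonal and off-diagonal terms separately, and kill each nontrivial term via Fact~\ref{fac:expectation_0}. The only cosmetic difference is in Part (2)'s off-diagonal terms, where the paper observes that $t_i - t_j$ is itself uniform on $[p]^d$ while you factor the expectation by independence; both are valid one-line arguments.
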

\begin{proof}

{\bf Part 1.} By definition of $c_0^{[T]}$, 
\begin{align*}
c_0^{[T]} = \frac{1}{|T|} \sum_{t \in T} \omega^{0 \cdot t} = 1.
\end{align*}

{\bf Part 2.} Let $T = \{t_1, \ldots, t_B\}$, where $t_i$ is independently and uniformly chosen from $[p]^d$. For any $f \in [p]^d \setminus \{0\}$, 
\begin{align*}
\E_{T} \left[ |c_f^{[T]}|^2 \right] = & ~ \E_{T} \left[ c_f^{[T]} \cdot \overline{c_f^{[T]}} \right] \\
= & ~ \frac{1}{|T|^2} \E_{T} \left[ \sum_{i,j \in [B]} \omega^{f^\top (t_i-t_j)} \right] \\
= & ~ \frac{1}{|T|^2} \left( |T| + \E_{T} \left[ \sum_{i,j\in [B], i \ne j} \omega^{f^\top (t_i-t_j)} \right] \right) \\
= & ~ \frac{1}{|T|} + \frac{1}{|T|^2} \sum_{i,j\in [B], i \ne j} \E_T \left[ \omega^{f^\top (t_i-t_j)} \right] \\
= & ~ \frac{1}{|T|} - \frac{1}{|T|^2} \cdot 0 \\
= & ~ \frac{1}{|T|} = \frac{1}{B},
\end{align*}
where the forth step follows by $\E_T[\omega^{f^\top (t_i - t_j)}] = \E_{t \sim [p]^d} [\omega^{f^\top t}] = 0$, in which $\E_T[\omega^{f^\top (t_i - t_j)}] = \E_{t \sim [p]^d} [\omega^{f^\top t}]$ because $i \ne j$, $t_i, t_j$ are independent and uniformly random distributed in $[p]^d$, $t_i-t_j \sim [p]^d$; $\E_{t \sim [p]^d} [\omega^{f^\top t}] = 0$ follows by by Fact~\ref{fac:expectation_0} and $f$ is not a zero vector.

{\bf Part 3.} For any $f,f' \in [p]^d$, $f \ne f'$,
\begin{align*}
\E_{T} \left[ c_f^{[T]} \cdot \overline{c_{f'}^{[T]}} \right] = & ~ \frac{1}{|T|^2} \E_{T} \left[ \sum_{i,j \in [B]} \omega^{f^\top t_i-f'^\top t_j} \right] \\
= & ~ \frac{1}{|T|^2} \left( \sum_{i,j\in [B], i\ne j} \E_T \left[ \omega^{f^\top t_i - f'^\top t_j} \right] + \sum_{i \in [B]} \E_T \left[ \omega^{(f-f')^\top t_i} \right] \right) \\ 
= & ~ \frac{1}{|T|^2} \left( \sum_{i,j\in [B], i\ne j} \E_{t_i \sim [p]^d} \left[ \omega^{f^\top t_i} \right] \E_{t_j \sim [p]^d} \left[ \omega^{-f'^\top t_j} \right] + \sum_{i \in [B]} \E_{t_i \sim [p]^d} \left[ \omega^{(f-f')^\top t_i} \right] \right) \\ 
= & ~ 0,
\end{align*}
where the second step follows from separating diagonal term and off-diagonal terms, the third step follows from $t_i$ and $t_j$ are independent, the last step follows from Fact~\ref{fac:expectation_0} where $f-f' \ne \vec 0$, and at least one of $f$ and $f'$ is not $\vec 0$.

\end{proof}

Let $T$ be a list of independent and uniformly random elements from $[p]^d$. 
We are going to measure $x_t$ for $t \in T$, and take $\wh{x}^{[T]}_f$ (recall its definition in Definition~\ref{def:measurement_notation}) as estimate to $\wh{x}_f$.
By Lemma~\ref{lem:measurement_decomposition}, $\wh{x}_f^{[T]} = \sum_{f' \in [p]^d} c_{f-f'}^{[T]} \wh{x}_{f'}$.
The following lemma bounds the contribution of coordinates from $V$ where $V \subseteq [p]^d \setminus \{f\}$, namely $| \sum_{f' \in V} c_{f-f'}^{[T]} \wh{x}_{f'}|$.
When analyzing the quality of $\wh{x}^{[T]}_f$ as an approximation to $\wh{x}_f$, we consider coordinates in $V$ as noise, and we usually set $V = [p]^d \setminus \{f\}$.

\begin{lemma}[noise bound]\label{lem:noise_bound}
For any $f \in [p]^d$, $T$ which is a list of $B$ i.i.d. samples from $[p]^d$ and $V \subseteq [n]$ such that $f \not\in V$,
\begin{align*}
\Pr_{T} \left[ \left| \sum_{f' \in V} c_{f-f'}^{[T]} \wh{x}_{f'} \right| \geq \frac{10}{\sqrt{B}} \|\wh{x}_V\|_2 \right] \leq \frac{1}{100}. 
\end{align*}
\end{lemma}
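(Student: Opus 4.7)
The plan is to bound the quantity by its second moment via Markov's inequality, using the orthogonality properties of the measurement coefficients established in Lemma~\ref{lem:coefficient_bound}. Let me denote the random variable of interest by $Y := \sum_{f' \in V} c_{f-f'}^{[T]} \wh{x}_{f'}$. Because $f \notin V$, the shift index $f - f'$ is a nonzero element of $[p]^d$ for every $f' \in V$, which is precisely the regime where Fact~\ref{fac:expectation_0} and parts~2--3 of Lemma~\ref{lem:coefficient_bound} apply.

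First, I would compute $\E_T[|Y|^2]$ by expanding
\begin{align*}
\E_T\bigl[|Y|^2\bigr] = \sum_{f',f'' \in V} \wh{x}_{f'} \overline{\wh{x}_{f''}} \cdot \E_T\bigl[c_{f-f'}^{[T]} \overline{c_{f-f''}^{[T]}}\bigr].
\end{align*}
For the off-diagonal terms $f' \neq f''$, we have $f - f' \neq f - f''$, so part~3 of Lemma~\ref{lem:coefficient_bound} gives $\E_T[c_{f-f'}^{[T]} \overline{c_{f-f''}^{[T]}}] = 0$. For the diagonal terms $f' = f''$, since $f - f' \neq \vec 0$, part~2 of Lemma~\ref{lem:coefficient_bound} gives $\E_T[|c_{f-f'}^{[T]}|^2] = 1/B$. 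Therefore
\begin{align*}
\E_T\bigl[|Y|^2\bigr] = \frac{1}{B} \sum_{f' \in V} |\wh{x}_{f'}|^2 = \frac{1}{B} \|\wh{x}_V\|_2^2.
\end{align*}

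Then I would apply Markov's inequality to the nonnegative random variable $|Y|^2$:
\begin{align*}
\Pr_T\left[|Y| \geq \frac{10}{\sqrt{B}} \|\wh{x}_V\|_2 \right] = \Pr_T\left[|Y|^2 \geq \frac{100}{B} \|\wh{x}_V\|_2^2 \right] \leq \frac{\E_T[|Y|^2]}{(100/B)\|\wh{x}_V\|_2^2} = \frac{1}{100},
\end{align*}
which is exactly the claimed tail bound.

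There is really no significant obstacle here: the whole proof is a routine second-moment calculation that rides on the orthogonality relations already proven in Lemma~\ref{lem:coefficient_bound}. The only subtle point is verifying that every coefficient appearing in $Y$ has nonzero shift index (so that the ``diagonal'' variance calculation is $1/B$ rather than $1$), but this is immediate from the assumption $f \notin V$. No chaining, no concentration inequality beyond Markov, and no union bound is needed.
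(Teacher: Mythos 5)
Your proof is correct and is essentially the paper's proof: the paper computes $\E_T[|Y|^2] = \frac{1}{B}\|\wh{x}_V\|_2^2$ via the same orthogonality relations from Lemma~\ref{lem:coefficient_bound} and then concludes with a Chebyshev-type second-moment tail bound, which is the same step as your Markov inequality applied to $|Y|^2$. No substantive difference.
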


\begin{proof}
First, we can prove that $\E_{T}\left[ \left| \sum_{f' \in V} c_{f-f'}^{[T]} \wh{x}_{f'} \right|^2 \right] = \frac{1}{B} \|\wh{x}_V\|_2^2$, because
\begin{align*}
\E_{T}\left[ \left| \sum_{f' \in V} c_{f-f'}^{[T]} \wh{x}_{f'} \right|^2 \right] = & ~ \E_{T} \left[ \sum_{f_1,f_2 \in V} (c_{f-f_1}^{[T]} \wh{x}_{f_1}) \overline{(c_{f-f_2}^{[T]} \wh{x}_{f_2})} \right] \\
= & ~ \sum_{f_1,f_2 \in V} \E_{T} \left[ c_{f-f_1}^{[T]} \overline{c_{f-f_2}^{[T]}} \right] \wh{x}_{f_1} \overline{\wh{x}_{f_2}} \\
= & ~ \sum_{f' \in V} \E_{T} \left[\left|c_{f-f'}^{[T]}\right|^2\right] |\wh{x}_{f'}|^2 \\
= & ~ \frac{1}{B} \|\wh{x}_V\|_2^2,
\end{align*}
where the third step follows from Lemma~\ref{lem:coefficient_bound} that for $f-f_1 \ne f-f_2$, $\E_{T}\left[ c_{f-f_1}^{[T]} \overline{c_{f-f_2}^{[T]}} \right] = 0$, and the last step follows from $\E_{T} \left[\left|c_{f-f'}^{[T]}\right|^2\right] = 1/B$ in Lemma~\ref{lem:coefficient_bound}.

Then the lemma follows by invoking Chebyshev Inequality and the fact that 
\begin{align*}
\Var_{T} \left[ \left| \sum_{f' \in V} c_{f-f'}^{[T]} \wh{x}_{f'} \right| \right] \leq \E_{T}\left[ \left| \sum_{f' \in V} c_{f-f'}^{[T]} \wh{x}_{f'} \right|^2 \right] = \frac{1}{B} \|\wh{x}_V\|_2^2.
\end{align*}
\end{proof}

\begin{algorithm}[!t]\caption{Procedure for reducing $\ell_\infty$ norm of the residual signal}\label{alg:linfinity_reduce}
\begin{algorithmic}[1]
\Procedure{\textsc{LinfinityReduce}}{$x, n, y, \{ {\cal T}_r \}_{r=1}^R, \nu$} \Comment{Lemma~\ref{lem:linfinity_reduce}}
  \State {\bf Require} that $\| \wh{x} - y \|_\infty \leq 2 \nu$
  \State Let $w$ be inverse Fourier transform of $y$ \Comment{We have $\wh{w} = y$}
  \For{$r=1 \to R$} \label{lin:linfinity_reduce_first_loop_begin}
    \For{$f=1 \to n$} \Comment{Implemented by FFT which takes $O(n \log n)$ time}
      \State $u_{f,r} \leftarrow \frac{\sqrt{n}}{|{\cal T}_r|} \sum_{t \in {\cal T}_r} \omega^{f^\top t} (x_t - w_t)$ \label{lin:u_f_r} \Comment{$\omega = e^{2\pi\i/p}, u_{f,r}=\wh{(x-w)}_f^{[{\cal T}_r]}$}
    \EndFor
  \EndFor \label{lin:linfinity_reduce_first_loop_end}
  \For{$f=1 \to n$}
    \State $\eta = \text{median}_{r\in[R]} \{ u_{f,r} \}$ \Comment{Take the median coordinate-wise} \label{lin:taking_meidan}
    \If{$|\eta| \geq \nu / 2$} \label{lin:if_eta_begin}
      \State $z_f \leftarrow \eta$
    \Else
      \State $z_f \leftarrow 0$
    \EndIf \label{lin:if_eta_end}
  \EndFor
  \State \Return $z$ \Comment{Guarantee $\| \wh{x} - y - z \|_\infty \leq \nu$}
\EndProcedure
\end{algorithmic}
\end{algorithm}

In the next lemma, we show the guarantee of $\textsc{LinfinityReduce}$ in Algorithm~\ref{alg:linfinity_reduce}.

\begin{lemma}[guarantee of $\textsc{LinfinityReduce}$ in Algorithm~\ref{alg:linfinity_reduce}]\label{lem:linfinity_reduce}
Let $x \in \C^{[p]^d}$, and $n = p^d$.
Let $R = C_R \log n$, and $B = C_B k$.
Let $C_B \geq 10^6$ and $C_R \geq 10^3$.
Let $\mu = \frac{1}{\sqrt{k}} \|\wh{x}_{-k}\|_2$, and $\nu \geq \mu$.
For $r \in [R]$, let ${\cal T}_r$ be a list of $B$ i.i.d. elements in $[p]^d$.
Let $z \in \C^n$ denote the output of 
\begin{align*}
\textsc{LinfinityReduce} (x, n, y, \{ {\cal T}_r \}_{r=1}^R, \nu).
\end{align*}
Let $S$ be top $C_S k$ coordinates in $\wh{x}$, where $C_S = 26$.
If $\| \wh{x} - y \|_\infty \leq 2 \nu$, $\supp(y) \subseteq S$ and $y$ is independent from the randomness of $\{{\cal T}_r\}_{r=1}^R$, then with probability $1 - 1 / \poly(n)$ under the randomness of $\{ {\cal T}_r \}_{r=1}^R$, $\| \wh{x} - y - z \|_\infty \leq \nu$ and $\supp(z) \subseteq S$.
Moreover, the running time of $\textsc{LinfinityReduce}$ is $O(n \log^2 n)$.
\end{lemma}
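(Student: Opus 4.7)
}
The plan is to reinterpret $u_{f,r}$ as an estimator of $\wh{x}_f - y_f$ with small random noise, then use the median-of-repetitions trick and a case analysis on the thresholding step. Concretely, let $x' = x - w$, so that $\wh{x'} = \wh{x} - y$. The estimator on line~\ref{lin:u_f_r} is precisely $u_{f,r} = \wh{x'}_f^{[\mathcal{T}_r]}$, and by Lemma~\ref{lem:measurement_decomposition} combined with $c_0^{[\mathcal{T}_r]} = 1$ (Lemma~\ref{lem:coefficient_bound}, Part~1), I can write
\begin{align*}
u_{f,r} \;=\; \wh{x'}_f \;+\; \sum_{f' \in [p]^d \setminus \{f\}} c_{f-f'}^{[\mathcal{T}_r]} \, \wh{x'}_{f'}.
\end{align*}
Since $y$ is independent of $\{\mathcal{T}_r\}$, so is $\wh{x'}$, which is what allows me to invoke the noise bound of Lemma~\ref{lem:noise_bound} conditionally on $\wh{x'}$.

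The next step is to control $\|\wh{x'}_{[p]^d \setminus \{f\}}\|_2$. Using $\supp(y) \subseteq S$ and $\|\wh{x} - y\|_\infty \le 2\nu$: for $f' \in S$ we have $|\wh{x'}_{f'}| \le 2\nu$, and for $f' \in \bar{S}$ we have $\wh{x'}_{f'} = \wh{x}_{f'}$, whose total $\ell_2$ mass is at most $\|\wh{x}_{-k}\|_2 = \sqrt{k}\mu \le \sqrt{k}\nu$ (since $|S| = C_S k \ge k$). This gives
\begin{align*}
\|\wh{x'}_{[p]^d \setminus \{f\}}\|_2^2 \;\le\; 4 C_S k \nu^2 + k \nu^2 \;=\; (4 C_S + 1) k \nu^2.
\end{align*}
Plugging into Lemma~\ref{lem:noise_bound} with $B = C_B k$, the noise term has magnitude at most $\frac{10 \sqrt{4C_S + 1}}{\sqrt{C_B}} \nu \le 0.11\nu$ with probability at least $99/100$, where the choices $C_S = 26$, $C_B = 10^6$ give enough slack.

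The median step (line~\ref{lin:taking_meidan}) is taken coordinate-wise on real and imaginary parts; since each $u_{f,r}$ is within $0.11\nu$ of $\wh{x'}_f$ with probability $\ge 99/100$ and the $\mathcal{T}_r$ are independent across $r \in [R]$, a standard Chernoff argument over $R = C_R \log n$ trials with $C_R = 10^3$ yields $|\eta - \wh{x'}_f| \le \sqrt{2}\cdot 0.11\nu < 0.16\nu$ for every fixed $f$ with failure probability at most $1/\poly(n)$. A union bound over the $n$ coordinates keeps the failure at $1/\poly(n)$. I then analyze the thresholding (lines~\ref{lin:if_eta_begin}--\ref{lin:if_eta_end}): if $|\eta| \ge \nu/2$ then $z_f = \eta$ and $|\wh{x'}_f - z_f| \le 0.16\nu \le \nu$; if $|\eta| < \nu/2$ then $z_f = 0$ and $|\wh{x'}_f| \le |\eta| + 0.16\nu < 0.66\nu \le \nu$, so in either case $\|\wh{x} - y - z\|_\infty \le \nu$.

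For the support inclusion, I need to show $z_f = 0$ whenever $f \in \bar{S}$. Since $f$ is not among the top $C_S k$ coordinates of $\wh{x}$, a standard tail-counting argument (comparing with $\wh{x}_{-k}$) gives $|\wh{x}_f|^2 \le \frac{\|\wh{x}_{-k}\|_2^2}{C_S k - k + 1} \le \mu^2/(C_S - 1) = \mu^2/25$, hence $|\wh{x'}_f| = |\wh{x}_f| \le \mu/5 \le \nu/5$. Consequently $|\eta| \le 0.2\nu + 0.16\nu < \nu/2$, forcing $z_f = 0$. The running time is immediate: each of the $R = O(\log n)$ repetitions computes $u_{\cdot, r}$ for all $f$ by a single length-$n$ FFT on $x - w$ in $O(n\log n)$ time, giving total $O(n \log^2 n)$. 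The main subtlety is being careful with the accounting of $\|\wh{x'}_{[p]^d \setminus \{f\}}\|_2$ so that the constants $C_S, C_B, C_R$ chosen in Table~\ref{tab:parameters} simultaneously give the noise bound, the $\ell_\infty$ reduction, and the support containment; everything else is essentially a union bound plus triangle inequality.
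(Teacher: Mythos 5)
Your proposal is correct and follows essentially the same route as the paper's proof: the same decomposition of $u_{f,r}$ via Lemma~\ref{lem:measurement_decomposition}, the same invocation of the noise bound (Lemma~\ref{lem:noise_bound}) enabled by independence of $y$ from $\{\mathcal{T}_r\}$, the same Chernoff-plus-union-bound treatment of the median, the same case analysis of the threshold, and the same tail bound $|\wh{x}_f|\le\mu/5$ for $f\in\ov{S}$. The only cosmetic difference is that you bound $\|(\wh{x}-y)_{[p]^d\setminus\{f\}}\|_2$ by summing squares rather than by the triangle inequality, yielding slightly different (but equally valid) constants.
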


\begin{proof}
  
Note that $\forall f \in \ov{S}$, 
\begin{align*}
|\wh{x}_f| \leq \sqrt{ \frac{\|\wh{x}_{-k}\|_2^2} {C_S k - k} } = \frac{1}{5} \mu ,
\end{align*}
where the last step follows from choice of $C_S$.
  
Let $w$ denote the inverse Fourier transform of $y$.
Note that on Line~\ref{lin:u_f_r} in Algorithm~\ref{alg:linfinity_reduce}, for any $f \in [p]^d$ and $r \in [R]$,
\begin{align*}
u_{f,r} = & ~ \frac{\sqrt{n}}{|{\cal T}_r|} \sum_{t \in {\cal T}_r} \omega^{f^\top t} (x_t - w_t) \\
= & ~ \wh{(x-w)}_f^{[{\cal T}_r]} \\
= & ~ \sum_{f'\in [n]} c_{f-f'}^{[{\cal T}_r]} (\wh{x}_{f'}-y_{f'}),
\end{align*}
where the second step follows by the notation in Definition~\ref{def:measurement_notation}, and the third step follows by Lemma~\ref{lem:measurement_decomposition}.
Therefore,
\begin{align}
\wh{x}_f - y_f = & ~ u_{f,r} - \sum_{f' \in [p]^d \setminus \{f\}} c_{f-f'}^{[{\cal T}_r]}(\wh{x}_{f'} - y_{f'}) , \label{eq:noise_decomposition}
\end{align}
By Lemma~\ref{lem:noise_bound},
\begin{align}\label{eq:noise_bound_application}
\Pr_{{\cal T}_r} \left[ \left| \sum_{f' \in [p]^d \setminus \{f\}} c_{f-f'}^{[{\cal T}_r]} (\wh{x}_{f'} - y_{f'}) \right| \geq \frac{10}{\sqrt{B}} \|(\wh{x} - y)_{[p]^d \setminus \{f\}}\|_2 \right] \leq \frac{1}{100}.
\end{align}

We have
\begin{align}\label{eq:bound_hatx_minus_y_bound_nu}
\frac{10}{\sqrt{B}} \|(\wh{x} - y)_{[p]^d \setminus \{f\}}\|_2 \leq & ~ \frac{10}{\sqrt{B}} \left( \| (\wh{x} - y)_{S \setminus \{f\}} \|_2 + \| (\wh{x} - y)_{\ov{S} \setminus \{f\}} \|_2 \right) \notag \\
\leq & ~ \frac{10}{\sqrt{B}} \left( \|\wh{x} - y\|_\infty \cdot \sqrt{|S|} + \|\wh{x}_{\ov{S} \setminus \{f\}}\|_2 \right) \notag \\ 
\leq & ~ \frac{10}{\sqrt{B}} \left( 2 \nu \cdot \sqrt{26 k} + \sqrt{k} \mu \right) \notag \\
\leq & ~ \frac{1}{100\sqrt{k}} \left( 2 \nu \cdot \sqrt{26 k} + \sqrt{k} \mu \right) \notag \\
< & ~ 0.12 \nu,
\end{align}
where the first step following by triangle inequality, the second step follows by the assumption that $\supp(y) \subseteq S$, the forth step follows by $C_B \geq 10^6$, the last step follows by $\mu \leq \nu$. 

Therefore,
\begin{align*}
\Pr_{{\cal T}_r} [|u_{f,r} - (\wh{x}_f-y_f)| \leq 0.12\nu] = & ~ \Pr_{{\cal T}_r} \left[ \left| \sum_{f' \in [p]^d \setminus \{f\}} c_{f-f'}^{[{\cal T}_r]} (\wh{x}_{f'} - y_{f'}) \right| \leq 0.12 \nu \right] \\
= & ~ 1 - \Pr_{{\cal T}_r} \left[ \left| \sum_{f' \in [p]^d \setminus \{f\}} c_{f-f'}^{[{\cal T}_r]} (\wh{x}_{f'} - y_{f'}) \right| > 0.12 \nu \right] \\
\geq & ~ 1 - \Pr_{{\cal T}_r}\left[ \left| \sum_{f' \in [p]^d \setminus \{f\}} c_{f-f'}^{[{\cal T}_r]} (\wh{x}_{f'} - y_{f'}) \right| \geq \frac{10}{\sqrt{B}} \|(\wh{x} - y)_{[p]^d \setminus \{f\}}\|_2 \right] \\
\geq & ~ 1 - \frac{1}{100},
\end{align*}
where the first step follows by \eqref{eq:noise_decomposition}, the third step follows by \eqref{eq:bound_hatx_minus_y_bound_nu}, and the last step follows by \eqref{eq:noise_bound_application}.
 
Thus we have 
\begin{align*}
\Pr_{{\cal T}_r}[ u_{f,r} \in \mathcal{B}_\infty(\wh{x}_f - y_f, 0.12 \nu)] \geq \Pr_{{\cal T}_r} [|u_{f,r} - (\wh{x}_f-y_f)| \leq 0.12\nu] \geq 1 - \frac{1}{100}.
\end{align*}
Let $\eta_f = \text{median}_{r \in [R]} u_{f,r}$ as on Line~\ref{lin:taking_meidan} in Algorithm~\ref{alg:linfinity_reduce}.
By Chernoff bound, with probability $1 - 1/\poly(n)$, more than $\frac{1}{2} R$ elements in $\{u_{f,r}\}_{r=1}^{R}$ are contained in box $\mathcal{B}_\infty(\wh{x}_f - y_f, 0.12 \nu)$, so that $\eta_f \in \mathcal{B}_\infty(\wh{x}_f - y_f, 0.12 \nu)$.

Therefore, we have 
\begin{align*}
\Pr[|\eta_f - (\wh{x}_f - y_f)|  \leq 0.17 \nu] \geq 
\Pr[|\eta_f - (\wh{x}_f - y_f)| \leq \sqrt{2} \cdot 0.12 \nu ] \geq 1 - 1/\poly(n).
\end{align*}
Let $E$ be the event that for all $f \in [p]^d$, $|\eta_f - (\wh{x}_f - y_f)| \leq 0.17 \nu$.
By a union bound over $f \in [p]^d$, event $E$ happens with probability $1 - 1/\poly(n)$.
In the rest of the proof, we condition on event $E$.

({\bf Case 1}) For $f \in \ov{S}$, note that 
\begin{align*}
|\eta_f| \leq 0.17 \nu + |\wh{x}_f - y_f| = 0.17 \nu + |\wh{x}_f| \leq 0.17 \nu + 0.2\nu = 0.37 \nu.
\end{align*}

According to the if statement between Line~\ref{lin:if_eta_begin} and Line~\ref{lin:if_eta_end} in Algorithm~\ref{alg:linfinity_reduce}, $z_f$ will be assigned $0$.
Thus $\supp(z) \subseteq S$.
In addition, $|\wh{x}_f - y_f - z_f| = |\wh{x}_f| \leq \mu \leq \nu$.

({\bf Case 2}) For $f \in S$, we have two cases.
We prove that $|(\wh{x}_f - y_f) - z_f| \leq \nu$ for both cases.

({\bf Case 2.1}) $|\eta_f| < 0.5 \nu$.
$z_f$ is assigned $0$. 
Because 
\begin{align*}
|\eta_f - (\wh{x}_f - y_f)| \leq 0.17 \nu, ~~~ |\wh{x}_f - y_f| \leq |\eta_f| + 0.17 \nu \leq 0.67 \nu.
\end{align*} 
Therefore, 
\begin{align*}
|(\wh{x}_f - y_f) - z_f| \leq 0.67 \nu \leq \nu.
\end{align*}

({\bf Case 2.2}) $|\eta_f| \geq 0.5 \nu$.
$z_f$ is assigned $\eta_f$. We have
\begin{align*}
|(\wh{x}_f - y_f) - z_f| = & ~ |(\wh{x}_f - y_f) - \eta_f| \leq 0.17 \nu \leq \nu.
\end{align*} 

We thus have obtained that with probability $1 - 1/\poly(n)$, $\| (\wh{x} - y) - z \|_\infty \leq \nu$ and $\supp(z) \subseteq S$.

The running time of \textsc{LinfinityReduce} is dominated by the loop between Line~\ref{lin:linfinity_reduce_first_loop_begin} and Line~\ref{lin:linfinity_reduce_first_loop_end}, which takes $O(R \cdot n \log n) = O(n \log^2 n)$ by FFT.
\end{proof}

For a given box $\mathcal{B}_\infty(c, r)$ and grid $\mathcal{G}_{r_g}$, we say a shift $s \in \C$ is good if after applying the shift, all the points in the shifted box $\mathcal{B}_\infty(c, r)+s$ are mapped to the same point by $\Pi_{r_g}$ (recall that $\Pi_{r_g}$ projects any point to the nearst grid point in $\mathcal{G}_{r_g}$). 
We formulate the notion of a good shift in the following definition, and illustrate in Figure~\ref{fig:good_shift}.

\begin{definition}[good shift] \label{def:good_shift}
For any $r_g$, $r_b$, and any $c \in \C$, we say shift $s \in \C$ is a good shift if
\begin{align*}
\left| \Pi_{r_g} ( \mathcal{B}_{\infty} ( c , r_b ) + s ) \right| = 1 .
\end{align*}
\end{definition}

\begin{figure}[!t]
  \centering
    \includegraphics[width=\textwidth]{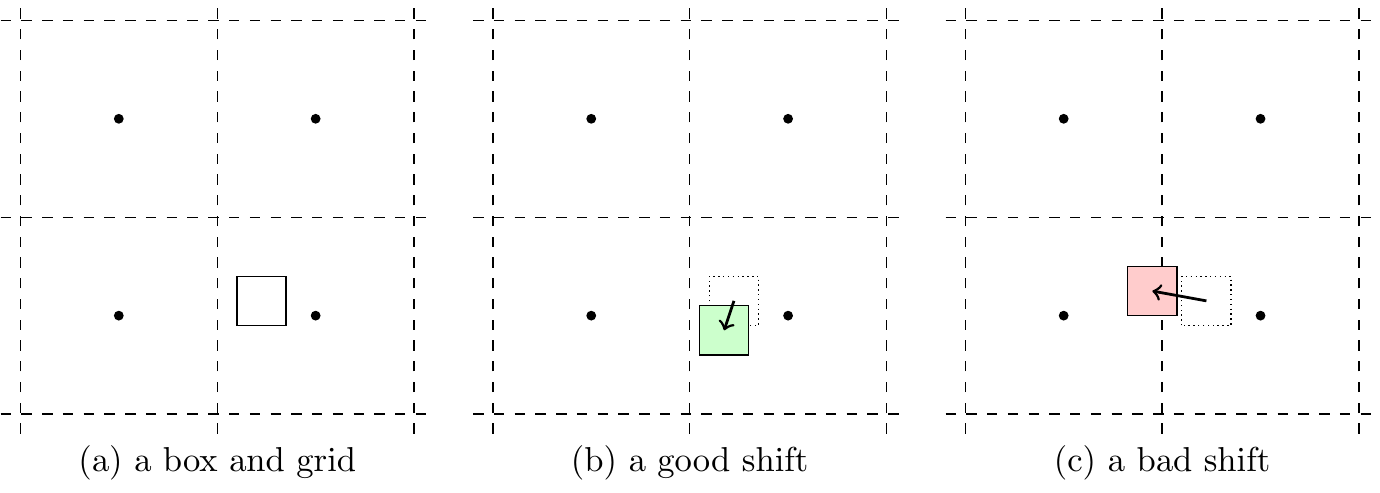}
    \caption{Illustration of good and bad shifts in Definition~\ref{def:good_shift}. 
    In (a), the small square represents box $\mathcal{B}_\infty(c, r_b)$, and the dashed lines represent the decision boundary of $\Pi_{r_g}$.
    The arrows in (b) and (c) represent two different shifts, where the shift in (b) is an example of good shift, since the shifted box does not intersect with the decision boundaries of $\Pi_{r_g}$, while the shift in (c) is an example of bad shift, since the shifted box intersects with the decision boundaries of $\Pi_{r_g}$.
    }\label{fig:good_shift}
\end{figure}

The following lemma intuitively states that if we take a box of radius $r_b$ (or equivalently, side length $2r_b$) and shift it randomly by an offset in $\mathcal{B}_\infty(0, r_s)$ (or equivalently, $[-r_s, r_s] \times [-r_s, r_s]$) where $r_s \geq r_b$, and next we round everyone inside that shifted box to the closest point in $G_{r_g}$ where $r_g \geq 2 r_s$, then with probability at least $(1 - r_b/r_s)^2$ everyone will be rounded to the same point.
In other words, let $s \sim \mathcal{B}_\infty(0, r_s)$, for box $\mathcal{B}_\infty(c, r_b)$ and grid $\mathcal{G}_{r_g}$, $s$ is a good shift with probability at least $(1 - r_b/r_s)^2$.
We illustrate the lemma in Figure~\ref{fig:random_shift_lemma}.

\begin{figure}[!t]
  \centering
    \includegraphics[width=0.8\textwidth]{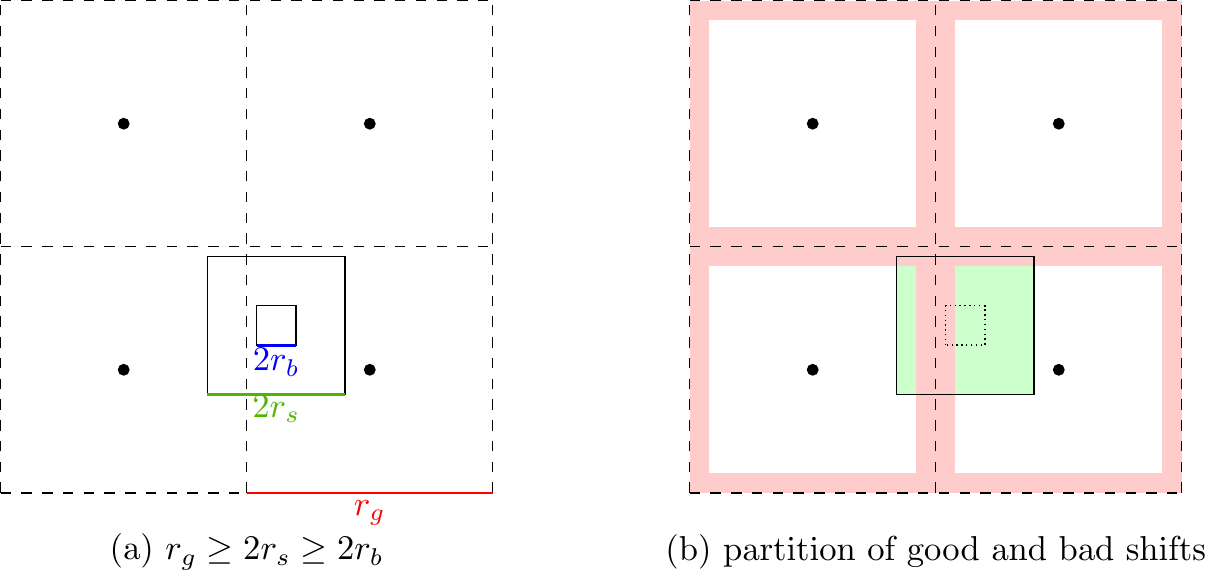}
    \caption{Illustration of Lemma~\ref{lem:random_shift}. In (a) the smallest square represents box $\mathcal{B}_\infty(c, r_b)$, the medium-sized square represents $\mathcal{B}_\infty(c, r_s)$, and the dashed lines represent decision boundaries of $\Pi_{r_g}$. 
    Note that for $s \sim \mathcal{B}_\infty(0, r_s)$, the center of the shifted box $s + \mathcal{B}_\infty(c, r_b)$ is $s+c \sim \mathcal{B}_\infty(c, r_s)$.
    Shift $s$ is good (recall in Definition~\ref{def:good_shift}) for box $\mathcal{B}_\infty(c, r_b)$ and grid $\mathcal{G}_{r_g}$ if and only if the distance between $s+c$ and decision boundaries of $\Pi_{r_g}$ is greater than $r_b$.
    In (b), we draw in red the set of points which are within distance at most $r_b$ to the decision boundaries of $\Pi_{r_g}$. Then in (b) the red part inside $\mathcal{B}_\infty(c, r_s)$ corresponds to bad shifts (plus $c$), and the green part corresponds to good shifts (plus $c$). 
    Intuitively, the fraction of the green part is at least $(1 - r_b / r_s)^2$ because the vertical red strips can cover a width of at most $2r_b$ on the $x$-axis of $\mathcal{B}_\infty(c, r_s)$ (whose side length is $2r_s$), and the horizontal red strips can cover a width of at most $2r_b$ on the $y$-axis.}\label{fig:random_shift_lemma}
\end{figure}

\begin{lemma}[property of a randomly shifted box]\label{lem:random_shift} 
For any $r_g,r_s,r_b$ so that $r_g/2 \geq r_s \geq r_b > 0$ and any $c \in \C$, let $s \in \C$ be uniform randomly chosen in $\mathcal{B}_\infty(0, r_s)$, then 
\begin{align*}
\Pr_{s \sim \mathcal{B}_\infty(0, r_s)} \Bigg[ \bigg| \Pi_{r_g}( \mathcal{B}_\infty(c, r_b) +s) \bigg| = 1 \Bigg] \geq \left( 1 - \frac{r_b}{r_s} \right)^2,
\end{align*}
where we refer $r_g,r_s,r_b$ as the radius of grid, shift and box respectively, and we use notation $C + s$ to refer to $\{c+s: c \in C\}$.
\end{lemma}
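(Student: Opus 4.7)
The plan is to reduce the event to a Cartesian product of two independent one-dimensional events, and then to bound each of these one-dimensional events by a simple geometric measure argument that uses the hypothesis $r_g \geq 2 r_s$.

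First I would observe that $\mathcal{B}_\infty(c, r_b) + s = \mathcal{B}_\infty(c+s, r_b)$. Identifying $\C$ with $\R^2$ via real and imaginary parts, the grid $\mathcal{G}_{r_g}$ becomes $r_g \Z \times r_g \Z$, whose Voronoi cells under $\Pi_{r_g}$ are axis-aligned squares of side $r_g$ centered at the grid points. Consequently $|\Pi_{r_g}(\mathcal{B}_\infty(c+s, r_b))| = 1$ iff the box $\mathcal{B}_\infty(c+s, r_b)$ lies entirely inside a single Voronoi cell, which in turn is equivalent to the conjunction of two one-dimensional events: the real-part interval $[\Re(c+s)-r_b, \Re(c+s)+r_b]$ is contained in a single cell $[r_g(k-\tfrac{1}{2}), r_g(k+\tfrac{1}{2})]$ for some $k \in \Z$, and the analogous statement for the imaginary part. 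Write $s = s_1 + s_2 \i$; by construction $s_1, s_2$ are independent uniform on $[-r_s, r_s]$, so the two one-dimensional events are independent. It therefore suffices to prove that each of them holds with probability at least $1 - r_b / r_s$.

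Next I would analyze the real coordinate; the imaginary coordinate is identical. The containment above is equivalent to requiring the distance from $\Re(c) + s_1$ to the nearest decision boundary (a point of the form $r_g(k+\tfrac{1}{2})$) to be at least $r_b$. Define the ``bad set'' $B = \bigcup_{k \in \Z}\bigl(r_g(k+\tfrac{1}{2}) - r_b,\; r_g(k+\tfrac{1}{2}) + r_b\bigr)$; this is an $r_g$-periodic union of open intervals of length $2 r_b$. The failure event for the real coordinate is precisely $\{\Re(c) + s_1 \in B\}$, and since $\Re(c) + s_1$ is uniform on an interval $I$ of length $2 r_s$, its probability equals $|I \cap B| / (2 r_s)$.

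The key geometric step is to bound $|I \cap B| \leq 2 r_b$ using $r_g \geq 2 r_s$. Because consecutive components of $B$ are separated by gap $r_g - 2 r_b$, and $|I| = 2 r_s \leq r_g$, the interval $I$ meets at most two components of $B$, and these (if two) are adjacent. If $I$ meets only one component, the intersection has length at most $\min(2 r_b, 2 r_s) = 2 r_b$. If $I$ meets two adjacent components, an elementary calculation (writing $I = [a, a + 2 r_s]$ and letting the two component centers be $r_g(k+\tfrac{1}{2})$ and $r_g(k+\tfrac{3}{2})$) gives $|I \cap B| \leq 2 r_b + 2 r_s - r_g \leq 2 r_b$, again using $r_g \geq 2 r_s$. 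Dividing by $2 r_s$ yields $\Pr[\Re(c)+s_1 \in B] \leq r_b / r_s$, and multiplying the two independent coordinate bounds gives the claimed $(1 - r_b/r_s)^2$.

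The whole argument is essentially a one-line geometric observation once the reduction to coordinates is made; the only mildly delicate step is the two-adjacent-components bookkeeping above, and this is exactly where the hypothesis $r_g \geq 2 r_s$ gets used. I do not anticipate any obstacle beyond writing this case carefully.
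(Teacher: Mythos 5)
Your proof is correct and follows essentially the same route as the paper's: identify $\C$ with $\R^2$, reduce the event to the conjunction of two independent one-dimensional events (one per coordinate), and bound each by $1 - r_b/r_s$. The only difference is that you actually carry out the bad-set measure computation (including the two-adjacent-components case where $|I \cap B| \leq 2r_b + 2r_s - r_g \leq 2r_b$), whereas the paper simply asserts the one-dimensional bound; your version makes explicit where the hypothesis $r_g \geq 2r_s$ is used.
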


\begin{proof}
We consider complex numbers as points in $2$D plane, where the real part is the coordinate on $x$-axis, and the imaginary part is the coordinate on $y$-axis.
Note that the ``decision boundary'' of projection $\Pi_{r_g}$ from $\C$ onto grid $\mathcal{G}_{r_g}$ consists of vertical lines of form $x = (m + \frac{1}{2}) r_g$ and horizontal lines of form $y = (m + \frac{1}{2}) r_g$, where $m \in \Z$.
$\left| \Pi_{r_g}( \mathcal{B}_\infty(c, r_b) +s) \right| = 1$ if and only if the shifted box $\mathcal{B}_\infty(c, r_b) +s$ does not intersect with the ``decision boundary''.

Let $s = s_x + s_y \i$ and $c = c_x + c_y \i$. 
Then the shifted box does not intersect with the ``decision boundary'' if and only if both intervals 
\begin{align*}
[c_x - r_b + s_x, c_x + r_b + s_x] \text{~and~} [c_y - r_b + s_y, c_y + r_b + s_y]
\end{align*}
do not intersect with $\{(m+\frac{1}{2})r_g: m \in \Z \}$.
The probability of each one is at least $1 - \frac{r_b}{r_s}$, and two events are independent.
Therefore, we get the claimed result.
\end{proof}

In the following, we define event $\mathcal{E}$, which is a sufficient condition for the correctness of Algorithm~\ref{alg:fourier_sparse_recovery}.
Event $\mathcal{E}$ consists of three parts. 
Part~1 of $\mathcal{E}$ is used to prove that $a_\ell \leq 10 
\log n$ for $\ell \in [L-1]$ on Line~\ref{lin:a_ell} in Algorithm~\ref{alg:fourier_sparse_recovery}.
Part~2 and Part~3 of $\mathcal{E}$ are used to prove that Line~\ref{lin:nu_ell} to Line~\ref{lin:z_assign} in Algorithm~\ref{alg:fourier_sparse_recovery} give a desirable $z^{(\ell)}$ for $\ell \in [L]$.

\begin{definition}[sufficient condition for the correctness of Algorithm~\ref{alg:fourier_sparse_recovery}] \label{def:sufficient_event}
For input signal $x \in \C^n$, let $\mu = \frac{1}{\sqrt{k}} \|\wh{x}_{-k}\|_2$ and $R^*$ is an upper bound of $\|\wh{x}\|_\infty / \mu$.
Let $S$ be top $C_S k$ coordinates in $\wh{x}$.
Let $H = \min \{ \log k + C_H, \log R^* \}$, and $L = \log R^* - H + 1$.
For $\ell \in [L]$, let $\nu_\ell = 2^{-\ell} \mu R^*$.
For $\ell \in [L-1]$, let $s_\ell^{(a)}$ be the $a$-th uniform randomly sampled from $\mathcal{B}_\infty(0, \alpha \nu_\ell)$ as appeared on Line~\ref{lin:shift} in Algorithm~\ref{alg:fourier_sparse_recovery} (i.e. $s_\ell^{(1)}, \ldots, s_\ell^{(a_\ell)}$ are sampled, and $s_\ell^{(a_\ell)}$ is the first that satisfies the condition on Line~\ref{lin:until_good}).
For the sake of analysis, we assume that Algorithm~\ref{alg:fourier_sparse_recovery} actually produces an infinite sequence of shifts $s_\ell^{(1)}, s_\ell^{(2)}, \ldots$, and chooses the smallest $a_\ell$ so that $s_\ell^{(a_\ell)}$ satisfies $\forall f \in \supp(y^{(\ell-1)}+z^{(\ell)}), |\Pi_{\beta \nu_\ell}(\mathcal{B}_\infty(y^{(\ell-1)}_f + z^{(\ell)}_f + s_\ell^{(a_\ell)}, 2^{1-H} \nu_\ell))|=1$ on Line~\ref{lin:until_good}.

For $\ell \in [L-1]$, we define random variable $a'_\ell$ to be the smallest $a'$ such that for all $f \in S$,
 \begin{align*}
 \left| \Pi_{\beta \nu_\ell} \left(\mathcal{B}_\infty(\wh{x}_f + s_\ell^{(a')}, 2^{3-H} \nu_\ell) \right) \right| = 1.
 \end{align*}

We define event $\mathcal{E}$ to be all of the following events hold. \\
1. For all $\ell \in [L-1]$, $a'_\ell \leq 10 \log n$. \\
2. For $\ell = 1$, if we execute Line~\ref{lin:nu_ell} to Line~\ref{lin:z_assign} in Algorithm~\ref{alg:fourier_sparse_recovery} with $y^{(0)} = 0$, we get $z^{(1)}$ such that $\|\wh{x} - z^{(1)}\|_\infty \leq 2^{1-H} \nu_1$ and $\supp(z^{(1)}) \subseteq S$. \\
3. For all $\ell \in \{2, \ldots, L\}$, for all $a \in [10 \log n]$, if we execute Line~\ref{lin:nu_ell} to Line~\ref{lin:z_assign} in Algorithm~\ref{alg:fourier_sparse_recovery} with $y^{(\ell-1)} = \xi$ where
\begin{align*}
\xi_f = 
\begin{cases} 
  \Pi_{\beta \nu_\ell}(\wh{x}_f + s_{\ell-1}^{(a)}), & \text{~if~} f \in S; \\ 
  0, & \text{~if~} f \in \ov{S}.
\end{cases} 
\end{align*}
then we get $z^{(\ell)}$ such that $\|\wh{x} - y^{(\ell-1)} - z^{(\ell)}\|_\infty \leq 2^{1-H} \nu_\ell$  and $\supp(y^{(\ell-1)}+z^{(\ell)}) \subseteq S$.

\end{definition}

In the following, we will prove that for fixed $x$, under the randomness of $\{s_\ell^{(a)}\}_{\ell \in[L-1], a\in\{1,\ldots\}}$ and $\mathcal{T} = \{\mathcal{T}^{(h)}\}_{h\in[H]}$, event $\mathcal{E}$ (defined in Definition~\ref{def:sufficient_event}) happens with probability at least $1 - 1 / \poly(n)$.
Moreover, we will prove that event $\mathcal{E}$ is a sufficient condition for the correctness of Algorithm~\ref{alg:fourier_sparse_recovery}.
Namely, conditioned on event $\mathcal{E}$, Algorithm~\ref{alg:fourier_sparse_recovery} gives a desirable output. 

\begin{lemma}[event $\mathcal{E}$ happens with high probability]\label{lem:E_happen_whp}
Let $\mathcal{E}$ in Definition~\ref{def:sufficient_event}. For any fixed $x \in \C^n$, under the randomness of shifts $\{s_\ell^{(a)}\}_{\ell \in [L-1], a \in \{1,\ldots\}}$ and $\mathcal{T} = \{ \mathcal{T}^{(h)}\}_{h \in [H]}$, 
\begin{align*}
\Pr[\mathcal{E}] \geq 1 - 1/\poly(n).
\end{align*}
\end{lemma}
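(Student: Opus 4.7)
The plan is to bound the failure probability of each of the three parts of $\mathcal{E}$ separately and then take a union bound. All three parts will cost at most $\poly(\log n)$ elementary events, each with failure probability $1/\poly(n)$, so the total failure probability will remain $1/\poly(n)$.

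For \textbf{Part 1}, I would apply Lemma~\ref{lem:random_shift} with $r_b = 2^{3-H}\nu_\ell$, $r_s = \alpha\nu_\ell$, $r_g = \beta\nu_\ell$. The condition $r_g/2 \geq r_s \geq r_b$ is easily checked from Table~\ref{tab:parameters} ($\beta/2 = 0.02 \geq \alpha = 10^{-3}$, and $2^{3-H}/\alpha \leq 1$ since $H \geq C_H = 20$). The lemma then yields, for a fixed $f \in S$, that a uniform shift $s$ is good with probability at least $(1 - r_b/r_s)^2 \geq 1 - 2 \cdot 2^{3-H}/\alpha$. Taking a union bound over $|S| = C_S k$ coordinates, a single shift $s_\ell^{(a)}$ is simultaneously good for all $f \in S$ with probability at least $1 - 2 C_S k \cdot 2^{3-H}/\alpha \geq 1/2$, using $H \geq \log k + C_H$ and the chosen values of $\alpha$, $\beta$, $C_S$, $C_H$. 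Since the shifts $s_\ell^{(1)}, s_\ell^{(2)}, \ldots$ are independent, the probability that none of the first $10 \log n$ shifts works is at most $2^{-10\log n} = 1/\poly(n)$. A final union bound over $\ell \in [L-1]$ (where $L = O(\log n)$) handles Part~1.

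For \textbf{Parts 2 and 3}, the approach is to repeatedly invoke Lemma~\ref{lem:linfinity_reduce}. In Part~2, starting from $y^{(0)} = \vec{0}$, we have $\|\wh{x} - y^{(0)}\|_\infty \leq \|\wh{x}\|_\infty \leq \mu R^* = 2\nu_1$ and $\supp(\vec{0}) \subseteq S$ trivially. Applying Lemma~\ref{lem:linfinity_reduce} with the fresh measurements ${\cal T}^{(1)}$ (independent of the trivial input) yields $z$ with $\|\wh{x} - z\|_\infty \leq \nu_1$ and $\supp(z) \subseteq S$ with probability $1 - 1/\poly(n)$. Then I iterate: at step $h$ the running $y^{(0)} + z$ depends only on ${\cal T}^{(1)}, \ldots, {\cal T}^{(h-1)}$, hence is independent of ${\cal T}^{(h)}$, so Lemma~\ref{lem:linfinity_reduce} applies again and halves the $\ell_\infty$ bound. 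After $H$ steps we obtain the claimed bound, with total failure $H/\poly(n)$. Part~3 is analogous but requires checking the preconditions of Lemma~\ref{lem:linfinity_reduce} for $\xi$: since $\xi_f = \Pi_{\beta\nu_\ell}(\wh{x}_f + s_{\ell-1}^{(a)})$ for $f \in S$, one has
\begin{align*}
|\wh{x}_f - \xi_f| \leq |s_{\ell-1}^{(a)}| + \tfrac{\sqrt{2}}{2}\beta\nu_\ell \leq \sqrt{2}\alpha\nu_{\ell-1} + \tfrac{\sqrt{2}}{2}\beta\nu_\ell \ll 2\nu_\ell,
\end{align*}
and for $f \in \ov{S}$, $|\wh{x}_f - \xi_f| = |\wh{x}_f| \leq \mu/5 \leq \nu_\ell$ (since $\nu_\ell \geq 2^{H-1}\mu \geq \mu$). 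Moreover $\xi$ is a function of the shift $s_{\ell-1}^{(a)}$, which is independent of $\mathcal{T}$, so the independence hypothesis of Lemma~\ref{lem:linfinity_reduce} holds. Applying the lemma $H$ times produces a desirable $z^{(\ell)}$ with failure $H/\poly(n)$ per pair $(\ell, a)$.

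\textbf{Finishing up.} Union bounding Part~3 over $\ell \in \{2, \ldots, L\}$ and $a \in [10 \log n]$ (and Part~2 as a single event) adds at most $L \cdot 10 \log n \cdot H = \poly(\log n)$ events, each with failure $1/\poly(n)$; together with Part~1, the total failure probability remains $1/\poly(n)$, proving the lemma. The most delicate point is the independence bookkeeping in Part~3: the shift $s_{\ell-1}^{(a)}$ (which defines $\xi$) must be independent of the measurements $\mathcal{T}$ used inside LinfinityReduce. This is ensured because shifts and sample sets are drawn from disjoint sources of randomness, and taking the union bound over \emph{all} $a \in [10 \log n]$ (rather than conditioning on the particular $a_{\ell-1}$ selected by the algorithm) decouples the choice of $a$ from the measurement randomness entirely.
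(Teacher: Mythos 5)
Your proposal is correct and follows essentially the same route as the paper: Lemma~\ref{lem:random_shift} plus a union bound over $S$ and over the $10\log n$ independent shift attempts for Part~1, and iterated applications of Lemma~\ref{lem:linfinity_reduce} (with a union bound over $h\in[H]$, and over $(\ell,a)$ for Part~3) for Parts~2 and~3; your explicit verification of the preconditions of Lemma~\ref{lem:linfinity_reduce} for $\xi$ and of the independence bookkeeping is more detailed than, but consistent with, the paper's argument. The only cosmetic omission is the degenerate case $H=\log R^*$, where $L=1$ and Part~1 is vacuous (the paper notes this explicitly before assuming $H=\log k+C_H$).
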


\begin{proof}
We bound the failure probability of each parts in event $\mathcal{E}$ respectively as follows, and $\Pr[\mathcal{E}] \geq 1 - 1/\poly(n)$ follows by a union bound.

{\bf Part 1.} If $H = \log R^*$, then $L = 1$ and it is trivially true that ``for all $\ell \in [L-1], a'_\ell \leq 10 \log n$''. Otherwise, we have $H = \log k + C_H$. By Lemma~\ref{lem:random_shift}, for any $\ell \in [L-1]$, for any $f \in S$,
\begin{align*}
\Pr_{s \sim \mathcal{B}_\infty(0, \alpha \nu_\ell)} \Bigg[ \bigg| \Pi_{\beta \nu_\ell} \left( \mathcal{B}_\infty \left( \wh{x}_f, 2^{3-H}\nu_\ell \right) +s \right) \bigg| = 1 \Bigg] \geq \left( 1 - \frac{2^{3-H}\nu_\ell}{\alpha \nu_\ell} \right)^2 = \left( 1 - \frac{2^{3-H}}{\alpha} \right)^2 ,
\end{align*}
where $(1 - 2^{3-H} / \alpha)^2 \geq 1 - 2^{4- C_H - \log k} / \alpha \geq 1 - \frac{1}{100 k}$ by our choice of $\alpha$ and $C_H$ in Table~\ref{tab:parameters}.

For each $\ell \in [L-1]$, by a union bound over all $f$ in $S$, the probability is at least $ 1 - \frac{C_S k}{100k} = 1 - \frac{26k}{100k} \geq \frac{1}{2}$ that for all $f \in S$, $|\Pi_{\beta \nu_\ell}(\mathcal{B}_\infty(\wh{x}_f + s, 2^{3-H} \nu_\ell))| = 1$ where $s \sim \mathcal{B}_\infty(0, \alpha \nu_\ell)$. Formally, we get
\begin{align*}
\Pr_{s \sim \mathcal{B}_\infty(0, \alpha \nu_\ell)} \Bigg[  \bigg| \Pi_{\beta \nu_\ell} \left( \mathcal{B}_\infty \left( \wh{x}_f, 2^{3-H}\nu_\ell \right) +s \right) \bigg| = 1, \forall f \in S \Bigg] \geq 1/2.
\end{align*}

Therefore, by definition of $a'_\ell$ in Definition~\ref{def:sufficient_event},
\begin{align*}
\Pr[a'_\ell \leq 10\log n] \geq 1 - (1 / 2)^{10 \log n} = 1 - 1 / n^{10}.
\end{align*}
By a union bound over all $\ell \in [L-1]$, the probability is at least $1 - L / n^{10} = 1 - 1 / \poly(n)$ that for all $\ell \in [L-1]$, $a'_\ell \leq 10 \log n$.

{\bf Part 2.} By Lemma~\ref{lem:linfinity_reduce} and a union bound over all $h \in [H]$, the failure probability is at most $H / \poly(n) = 1 / \poly(n)$, where $H = O(\log k)$ and so $H / \poly(n)$ is still $1 / \poly(n)$.

{\bf Part 3.} For each $\ell \in \{2, \ldots, L\}$ and $a \in [10 \log n]$, similar to the above argument, each has failure probability at most $1 / \poly(n)$.
By a union bound, the failure probability is at most 
\begin{align*}
(L - 1) \cdot (10 \log n) / \poly(n) = 1 / \poly(n).
\end{align*}

\end{proof}

In the following lemma, we show that if event $\mathcal{E}$ (defined in Definition~\ref{def:sufficient_event}) happens, then Algorithm~\ref{alg:fourier_sparse_recovery} gives a desirable output.

\begin{lemma}[correctness of Algorithm~\ref{alg:fourier_sparse_recovery} conditioned on $\mathcal{E}$]\label{lem:correctness_of_fourier_sparse_recovery}
Let $n = p^d$, and let $k \in [n]$.
Let $x \in \C^{n}$ be input signal.
Let $\mu = \frac{1}{\sqrt{k}} \|\wh{x}_{-k}\|_2$. 
Let $R^* \geq \|\wh{x}\|_\infty\ / \mu$ and $R^*$ is a power of $2$. 
Let $H = \min \{\log k + C_H, \log R^*\}$.
Let $L = \log R^* - H + 1$.
For $\ell \in [L-1]$, let $y^{(\ell)}$ be the vector obtained on Line~\ref{lin:y_ell_assign} of Algorithm~\ref{alg:fourier_sparse_recovery}.
For $\ell \in [L]$, let $z^{(\ell)}$ be the vector obtained on Line~\ref{lin:z_assign}.
Note that $y^{(0)} = 0$, and $y^{(L-1)}+z^{(L)}$ is the output of $\textsc{FourierSparseRecovery}(x,n,k,R^*,\mu)$ in Algorithm~\ref{alg:fourier_sparse_recovery}.
Conditioned on the event $\mathcal{E}$ (defined in Definition~\ref{def:sufficient_event}) happens, we have 
\begin{align*}
\| \wh{x} - y^{(L-1)} - z^{(L)} \|_\infty \leq \frac{1}{\sqrt{k}} \|\wh{x}_{-k}\|_2.
\end{align*}
\end{lemma}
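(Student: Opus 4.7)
The plan is to prove the lemma by induction on $\ell \in \{0, 1, \ldots, L-1\}$, with hypothesis that at the end of iteration $\ell$ the vector $y^{(\ell)}$ satisfies (a) $\|\wh{x} - y^{(\ell)}\|_\infty \le \nu_\ell$, (b) $\supp(y^{(\ell)}) \subseteq S$, and (c) $y^{(\ell)}$ coincides with one of the at most $10\log n$ deterministic vectors $\xi^{(a)}$ that are enumerated in Part~3 of Definition~\ref{def:sufficient_event} (with $\xi^{(a)}$ built from the candidate shifts $s_\ell^{(a)}$, $a \in [10\log n]$). Property (c) is the crucial piece, because it is what lets us re-use the samples $\mathcal{T}$: whenever the next iteration calls \textsc{LinfinityReduce} with input $y^{(\ell)}$, Part~3 of $\mathcal{E}$ has already covered that particular input deterministically. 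The base case $\ell = 0$ is immediate since $y^{(0)} = \vec 0$ and $\|\wh{x}\|_\infty \le \mu R^* = \nu_0 = 2\nu_1$; the first iteration is then handled by Part~2 of $\mathcal{E}$, which plays the role of Part~3 specialized to $y = \vec 0$.

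For the inductive step I would first invoke Part~2 (if $\ell = 1$) or Part~3 (if $\ell \geq 2$) of $\mathcal{E}$ to conclude that the $H$-fold application of \textsc{LinfinityReduce} on Lines~\ref{lin:inner_loop_start}--\ref{lin:inner_loop_end} produces $z^{(\ell)}$ with $\|\wh{x} - y^{(\ell-1)} - z^{(\ell)}\|_\infty \le 2^{1-H}\nu_\ell$ and $\supp(y^{(\ell-1)} + z^{(\ell)}) \subseteq S$. When $\ell = L$ the algorithm terminates at Line~\ref{lin:return_recovered_signal}, and since $L + H = \log R^* + 1$ we get
\begin{align*}
2^{1-H}\nu_L = 2 \cdot 2^{-L-H}\mu R^* = \mu,
\end{align*}
which is exactly the desired bound, completing the proof in this branch.

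When $\ell < L$ the algorithm enters the shift/projection phase. Two things must be checked. First, the repeat-until loop terminates within $10\log n$ attempts. Part~1 of $\mathcal{E}$ guarantees $a'_\ell \le 10\log n$ where $a'_\ell$ refers to the ``slack'' box of radius $2^{3-H}\nu_\ell$ around $\wh{x}_f$; using $|\wh{x}_f - (y^{(\ell-1)}_f + z^{(\ell)}_f)| \le 2^{1-H}\nu_\ell$ on $\supp(y^{(\ell-1)} + z^{(\ell)}) \subseteq S$, I would conclude the inclusion
\begin{align*}
\mathcal{B}_\infty\!\left(y^{(\ell-1)}_f + z^{(\ell)}_f + s_\ell^{(a)},\, 2^{1-H}\nu_\ell\right) \subseteq \mathcal{B}_\infty\!\left(\wh{x}_f + s_\ell^{(a)},\, 2^{3-H}\nu_\ell\right),
\end{align*}
so the algorithmic success condition on Line~\ref{lin:until_good} is implied by the condition defining $a'_\ell$, giving $a_\ell \le a'_\ell \le 10\log n$. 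Second, once a good shift $s_\ell$ is selected, every point in the small box rounds to the same grid element, and since $\wh{x}_f + s_\ell$ lies in that box we obtain $y^{(\ell)}_f = \Pi_{\beta\nu_\ell}(\wh{x}_f + s_\ell)$ on the support and $0$ elsewhere, which is exactly one of the $\xi^{(a)}$'s; this closes clause (c).

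The $\ell_\infty$ bound (a) is then a short calculation: on the support, the error is at most $|s_\ell| + (\text{half-diagonal of the } \beta\nu_\ell\text{-cell}) \le (\sqrt{2}\alpha + \beta/\sqrt{2})\nu_\ell$, which is far below $\nu_\ell$ for the constants in Table~\ref{tab:parameters}; off the support, either $f \in \ov{S}$ and $|\wh{x}_f| \le \mu/\sqrt{C_S - 1}$, or $f \in S$ with $y^{(\ell-1)}_f + z^{(\ell)}_f = 0$ and hence $|\wh{x}_f| \le 2^{1-H}\nu_\ell$. The main obstacle is the sample-reuse bookkeeping captured by clause (c): one must argue that the random $y^{(\ell)}$ actually equals one of the finitely many enumerated $\xi^{(a)}$'s (not merely is close to one), because otherwise the union bound implicit in $\mathcal{E}$ would fail to cover the true input of the next \textsc{LinfinityReduce} call and the whole inductive scheme would collapse.
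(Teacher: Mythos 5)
Your proposal is correct and follows essentially the same route as the paper's proof: induction on $\ell$ identifying $y^{(\ell)}$ with the deterministic vector $\xi^{(\ell)}$ enumerated in $\mathcal{E}$, the box inclusion $\mathcal{B}_\infty(y_f^{(\ell-1)}+z_f^{(\ell)}+s,2^{1-H}\nu_\ell)\subseteq\mathcal{B}_\infty(\wh{x}_f+s,2^{3-H}\nu_\ell)$ to get $a_\ell\le a_\ell'\le 10\log n$, and the final computation $2^{1-H}\nu_L=\mu$. The only point to spell out when closing clause (c) is that for $f\in S\setminus\supp(y^{(\ell-1)}+z^{(\ell)})$ one also has $\Pi_{\beta\nu_\ell}(\wh{x}_f+s_\ell)=0$ (since $|\wh{x}_f+s_\ell|<0.1\beta\nu_\ell+2\alpha\nu_\ell<0.5\beta\nu_\ell$), so the algorithm's output really equals $\xi^{(\ell)}$ on all of $S$ and not merely on the support — you already state the needed bound $|\wh{x}_f|\le 2^{2-H}\nu_\ell$ there.
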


\begin{proof}
We first discuss the case that $H = \log R^*$.
In that case, $L = 1$. 
Conditioned on the event $\mathcal{E}$ (Part~2 of $\mathcal{E}$), $z^{(1)}$ obtained through Line~\ref{lin:nu_ell} to Line~\ref{lin:z_assign} in Algorithm~\ref{alg:fourier_sparse_recovery} satisfies $\|\wh{x} - z^{(1)}\|_\infty \leq 2^{1-H} \nu_1 = 2^{1-H} (2^{-1}\mu R^*) = \mu$.

In the rest of the proof, we discuss the case that $H > \log R^*$.
For $\ell \in [L]$, let $\nu_\ell = 2^{-\ell} \mu R^*$.
For $\ell \in [L-1]$, let $s_\ell^{(a_\ell)} \in \mathcal{B}_\infty(0, \alpha \nu_\ell)$ denote the first $s_\ell^{(a)}$ on Line~\ref{lin:shift} in Algorithm~\ref{alg:fourier_sparse_recovery} such that for all $f \in \supp(y^{(\ell-1)}+z^{(\ell)})$,
\begin{align*}
\left| \Pi_{\beta \nu_\ell} \left( \mathcal{B}_\infty \left( y_f^{(\ell-1)} + z^{(\ell)}_f + s_\ell^{(a)}, 2^{1-H} \nu_\ell \right) \right) \right|=1.
\end{align*}
For $\ell \in [L-1]$, we define $\xi^{(\ell)} \in \C^{[p]^d}$ as follows
\begin{align*}
\xi_f^{(\ell)} = 
\begin{cases} 
  \Pi_{\beta \nu_\ell}(\wh{x}_f + s_\ell^{(a_\ell)}), & \text{~if~} f \in S; \\ 
  0, & \text{~if~} f \in \ov{S}.
\end{cases}
\end{align*}
We also define $\xi^{(0)} = 0$, $s_0^{(a)} = 0$ for $a \in \{1,\ldots\}$ and $a_0 = 1$.

({\bf Goal: Inductive Hypothesis}) We are going to prove that conditioned on event $\mathcal{E}$ (defined in Definition~\ref{def:sufficient_event}), for all $\ell \in \{0,\ldots, L-1\}$, 
\begin{align*}
y^{(\ell)} = \xi^{(\ell)} \text{~~~and~~~} a_\ell \leq 10 \log n.
\end{align*}

({\bf Base case}) Note that $y^{(0)} = \xi^{(0)} = 0$ and $a_0 = 1 \leq 10 \log n$. 

({\bf Inductive step}) We will prove that conditioned on event $\mathcal{E}$, if $y^{(\ell-1)} = \xi^{(\ell-1)}$ and $a_{\ell-1} \leq 10 \log n$ for $\ell \in [L-1]$, then $y^{(\ell)} = \xi^{(\ell)}$ and $a_\ell \leq 10 \log n$.

({\bf Proving $a_\ell \leq 10 \log n$}) Conditioned on event $\mathcal{E}$ (if $L=1$ then from Part~2 of $\mathcal{E}$, otherwise from Part~3 of $\mathcal{E}$ and by the fact that $a_{\ell-1} \leq 10 \log n$), $z^{(\ell)}$ obtained through Line~\ref{lin:nu_ell} to Line~\ref{lin:z_assign} in Algorithm~\ref{alg:fourier_sparse_recovery} satisfies $\|\wh{x} - \xi^{(\ell-1)} - z^{(\ell)}\|_\infty \leq 2^{1-H} \nu_\ell$ and $\supp(z^{(\ell)}) \subseteq S$.
Namely, for all $f \in [p]^d$, $\xi_f^{(\ell-1)} + z^{(\ell)}_f \in \mathcal{B}_\infty(\wh{x}_f, 2^{1-H} \nu_\ell)$.
Recall the definition of $a_\ell'$ in Definition~\ref{def:sufficient_event}.
We can prove that $a_\ell \leq a_\ell'$ because if for all $f \in S$, 
\begin{align*}
\left| \Pi_{\beta \nu_\ell} \left(\mathcal{B}_\infty \left( \wh{x}_f + s_\ell^{(a'_\ell)}, 2^{3-H} \nu_\ell \right) \right) \right| = 1,
\end{align*}
then for all $f \in \supp(y^{(\ell-1)}+z^{(\ell)})$, 
\begin{align*}
\left| \Pi_{\beta \nu_\ell} \left( \mathcal{B}_\infty \left( y^{(\ell-1)}_f + z^{(\ell)}_f + s_\ell^{(a'_\ell)}, 2^{1-H} \nu_\ell \right) \right) \right|=1
\end{align*}
where 
\begin{align*}
\mathcal{B}_\infty \left( y^{(\ell-1)}_f + z^{(\ell)}_f + s_\ell^{(a'_\ell)}, 2^{1-H} \nu_\ell \right) \subseteq \mathcal{B}_\infty \left( \wh{x}_f + s_\ell^{(a'_\ell)}, 2^{3-H} \nu_\ell \right)
\end{align*}
which follows by $\xi_f^{(\ell-1)} + z^{(\ell)}_f \in \mathcal{B}_\infty(\wh{x}_f, 2^{1-H} \nu_\ell)$.
Therefore, conditioned on $\mathcal{E}$ (Part~1 of $\mathcal{E}$), $a_\ell \leq a_\ell' \leq 10 \log n$.

({\bf Proving $y^{(\ell)}_f = \xi^{(\ell)}_f$}) For $f \in [p]^d$, we will prove that $y^{(\ell)}_f = \xi^{(\ell)}_f$ in two cases.

({\bf Case 1}) If $f \in \supp(y^{(\ell-1)} + z^{(\ell)}) \subseteq S$. We have 
\begin{align*}
y_f^{(\ell)} 
= & ~ \Pi_{\beta \nu_\ell} \left( y^{(\ell-1)}_f + z^{(\ell)}_f + s_\ell^{(a_\ell)} \right) \\
= & ~ \Pi_{\beta \nu_\ell} \left( \xi^{(\ell-1)}_f + z^{(\ell)}_f + s_\ell^{(a_\ell)} \right).
\end{align*} 
Because $\xi_f^{(\ell-1)} + z^{(\ell)}_f \in \mathcal{B}_\infty(\wh{x}_f, 2^{1-H} \nu_\ell)$, we have $\xi^{(\ell-1)}_f + z^{(\ell)}_f + s_\ell^{(a_\ell)} \in \mathcal{B}_\infty(\wh{x}_f + s_\ell^{(a_\ell)}, 2^{1-H} \nu_\ell)$.
By the choice of $s_\ell^{(a_\ell)}$, $\Pi_{\beta \nu_\ell}(\xi^{(\ell-1)}_f + z^{(\ell)}_f + s_\ell^{(a_\ell)}) = \Pi_{\beta\nu_\ell}(\wh{x}_f + s_\ell^{(a_\ell)})$.
Thus $y_f^{(\ell)} = \xi_f^{(\ell)}$.

({\bf Case 2}) If $f \not\in \supp(y^{(\ell-1)} + z^{(\ell)})$. We have $y_f^{(\ell)} = 0$.
Because $\xi_f^{(\ell-1)} + z^{(\ell)}_f \in \mathcal{B}_\infty(\wh{x}_f, 2^{1-H} \nu_\ell)$, we have $|\wh{x}_f| < 2^{2 - H} \nu_\ell < 0.1 \beta \nu_\ell$ by our choice of $H$.
We can easily prove that $\xi_f^{(\ell)} = 0 = y_f^{(\ell)}$ in the following two cases: 

({\bf Case 2.1}) If $f \in S$, we have $\xi_f^{(\ell)} = \Pi_{\beta \nu_\ell}(\wh{x}_f + s_\ell^{(a_\ell)}) = 0$ because
\begin{align*}
|\wh{x}_f| + |s_\ell^{(a_\ell)}| < 0.1 \beta \nu_\ell + 2 \alpha \nu_\ell < 0.5 \beta \nu_\ell.
\end{align*}

({\bf Case 2.2}) If $f \in \ov{S}$,  $\xi_f^{(\ell)} = 0$ by definition of $\xi^{(\ell)}$.

Therefore, for all $\ell \in [L-1]$, $y^{(\ell)} = \xi^{(\ell)}$ and $a_\ell \leq 10 \log n$.
Again conditioned on event $\mathcal{E}$ (Part~3 of $\mathcal{E}$), $z^{(L)}$ obtained through Line~\ref{lin:nu_ell} to Line~\ref{lin:z_assign} in Algorithm~\ref{alg:fourier_sparse_recovery} satisfies 
\begin{align*}
\|\wh{x} - y^{(L-1)} - z^{(L)}\|_\infty \leq 2^{1-H} \nu_L = 2^{1-H} (2^{-(\log R^* - H +1)}\mu R^*) = \mu.
\end{align*}
Therefore, $y^{(L-1)} + z^{(L)}$ on Line~\ref{lin:return_recovered_signal} gives a desirable output.

\end{proof}

Now we present our main theorem, which proves the correctness of Algorithm~\ref{alg:fourier_sparse_recovery}, and shows its sample complexity and time complexity.

\begin{theorem}[main result, formal version]\label{thm:fourier_sparse_recovery_formal}
Let $n = p^d$ where both $p$ and $d$ are positive integers.
Let $x \in \C^{[p]^d}$.
Let $k \in \{1, \ldots, n\}$. 
Assume we know $\mu = \frac{1}{k} \|\wh{x}_{-k}\|_2$ and $R^* \geq \|\wh{x}\|_\infty / \mu$ where $\log R^* = O(\log n)$.
There is an algorithm (Algorithm~\ref{alg:fourier_sparse_recovery}) that takes $O(k \log k \log n)$ samples from $x$, runs in $O(n \log^3 n \log k)$ time, and outputs a $O(k)$-sparse vector $y$ such that
\begin{align*}
\| \wh{x} - y \|_{\infty} \leq \frac{1}{\sqrt{k}}\min_{k-\sparse~x'} \| \wh{x} - x' \|_2
\end{align*}
holds with probability at least $1 - 1 / \poly(n)$.
\end{theorem}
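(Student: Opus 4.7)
The plan is to assemble the theorem from the two main lemmas already proved in the section, namely Lemma~\ref{lem:E_happen_whp} (which shows that the ``good'' event $\mathcal{E}$ from Definition~\ref{def:sufficient_event} occurs with probability $1 - 1/\poly(n)$) and Lemma~\ref{lem:correctness_of_fourier_sparse_recovery} (which shows that, conditioned on $\mathcal{E}$, Algorithm~\ref{alg:fourier_sparse_recovery} returns a vector $y$ with $\|\wh{x} - y\|_\infty \leq \mu = (1/\sqrt{k})\|\wh{x}_{-k}\|_2$). These two statements immediately yield the correctness part of the theorem: we invoke Lemma~\ref{lem:E_happen_whp} to assert $\Pr[\mathcal{E}] \geq 1 - 1/\poly(n)$, and then apply Lemma~\ref{lem:correctness_of_fourier_sparse_recovery} pointwise on the event $\mathcal{E}$. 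The output $y = y^{(L-1)} + z^{(L)}$ is $O(k)$-sparse because $\supp(y^{(\ell)}) \subseteq S$ for all $\ell$ and $|S| = C_S k = O(k)$, which can be read off from the proof of Lemma~\ref{lem:correctness_of_fourier_sparse_recovery}.

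For the sample complexity, I would count directly from Algorithm~\ref{alg:fourier_sparse_recovery}. The algorithm draws, for each $h \in [H]$ and each $r \in [R]$, a list ${\cal T}_r^{(h)}$ of $B$ uniform samples from $[p]^d$, and measures $x_t$ for every $t$ in the union. Plugging in $H = \min\{\log k + C_H, \log R^*\} = O(\log k)$, $R = C_R \log n = O(\log n)$, and $B = C_B k = O(k)$ gives a total of $H \cdot R \cdot B = O(k \log k \log n)$ time-domain samples, as claimed. Crucially, these samples are shared across all $L = O(\log R^*) = O(\log n)$ outer iterations; this is precisely what the random-shift/projection step (Lines~\ref{lin:repeat_to_find_the_shift}--\ref{lin:y_ell_assign}) and the sufficient event $\mathcal{E}$ were designed to enable, so no additional samples are needed beyond the initial batch.

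For the running time, the dominant cost per outer iteration $\ell$ is the $H$ calls to \textsc{LinfinityReduce}. By Lemma~\ref{lem:linfinity_reduce}, each such call runs in $O(n \log^2 n)$ time (dominated by the $R$ inner FFTs). The random-shift loop between Lines~\ref{lin:repeat_to_find_the_shift} and~\ref{lin:until_good} terminates within $O(\log n)$ iterations with high probability (this is exactly Part~1 of event $\mathcal{E}$, which bounds $a_\ell \leq 10 \log n$), and each trial checks $O(k)$ coordinates, contributing only $O(k \log n)$ per outer iteration, which is swamped by the FFT cost. Multiplying the per-iteration cost $O(H \cdot n \log^2 n) = O(n \log^2 n \log k)$ by $L = O(\log n)$ outer iterations gives the claimed running time $O(n \log^3 n \log k)$.

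No step is truly an obstacle since the heavy lifting was already done in Lemmas~\ref{lem:E_happen_whp} and \ref{lem:correctness_of_fourier_sparse_recovery}; the only care needed is to make sure that (a) in the edge case $H = \log R^*$ (so $L = 1$), the algorithm is still well-defined and the $\mathcal{E}$-conditional correctness argument degenerates correctly (Lemma~\ref{lem:correctness_of_fourier_sparse_recovery} handles this explicitly), and (b) the bound $a_\ell \leq 10 \log n$ used in the running-time calculation is derived from the same event $\mathcal{E}$ on which the correctness is conditioned, so no additional independent union bound is incurred. Putting these pieces together yields Theorem~\ref{thm:fourier_sparse_recovery_formal}.
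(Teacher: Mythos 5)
Your proposal is correct and follows essentially the same route as the paper: correctness from Lemma~\ref{lem:E_happen_whp} combined with Lemma~\ref{lem:correctness_of_fourier_sparse_recovery}, sample count $B \cdot R \cdot H = O(k \log k \log n)$, and running time $L \cdot H$ invocations of \textsc{LinfinityReduce} at $O(n \log^2 n)$ each. Your added remarks on the $O(k)$-sparsity of the output and the $L=1$ edge case are sound and slightly more careful than the paper's own (very terse) proof.
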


\begin{proof}
The correctness of Algorithm~\ref{alg:fourier_sparse_recovery} follows directly from Lemma~\ref{lem:E_happen_whp} and Lemma~\ref{lem:correctness_of_fourier_sparse_recovery}.
The number of samples from $x$ is 
\begin{align*}
B \cdot R \cdot H = O(k \cdot \log n \cdot \log k) = O(k \log k \log n).
\end{align*}
Its running time is dominated by $L \cdot H = O(\log k \log n)$ invocations of \textsc{LinfinityReduce} (in Algorithm~\ref{alg:linfinity_reduce}).
By Lemma~\ref{lem:linfinity_reduce}, the running time of \textsc{LinfinityReduce} is $O(n \log^2 n)$.
Therefore, the running time of Algorithm~\ref{alg:fourier_sparse_recovery} is 
\begin{align*}
O(L \cdot H \cdot n \log^2 n) = O( \log k \cdot \log n \cdot n \log^2 n ) = O( n \log^3 n \log k).
\end{align*}
\end{proof}





\end{document}